\newtheorem{theorem}{Theorem}[section]
\newtheorem{Lem}[theorem]{Lemma}
\newtheorem{corollary}[theorem]{Corollary}
\newtheorem{Prop}[theorem]{Proposition}
\theoremstyle{definition}
\newtheorem{Def}[theorem]{Definition}
\theoremstyle{remark}
\newtheorem{Rem}[theorem]{Remark}
\numberwithin{equation}{section}
\newcommand{\calA}{\mathcal{A}}
\newcommand{\calH}{\mathcal{H}}
\newcommand{\Hom}{\mathrm{Hom}}
\newcommand{\calB}{{\mathcal B}}
\newcommand{\caH}{{\mathcal H}}
\newcommand{\calP}{{\mathcal P}}
\newcommand{\calU}{{\mathcal U}}
\newcommand{\br}{{\mathbb R}}
\newcommand{\bc}{{\mathbb C}}
\newcommand{\todo}[1]{{\color{red} To Do : #1}}
\newcommand{\sysga}[1]{system of on-site unitary #1 actions}
\newcommand{\id}{\mathop{\mathrm{id}}\nolimits}
\newcommand{\Ad}{\mathop{\mathrm{Ad}}\nolimits}
\newcommand{\Aut}{{\mathrm{Aut}}}
\newcommand{\End}{{\mathrm{End}}}
\newcommand{\norm}[1]{\left\lVert#1\right\rVert}
\newcommand{\xbullet}{{}}
\newcommand{\cxbullet}{{}}
\newif\ifuc
\begin{document}
\title{On Symmetry-Compatible Superselection Structures for Product States in 2D Quantum Spin Systems}

\author[1]{Matthew Corbelli}
\affil[1]{Work done while at the Department of Mathematics, University of California, Davis}

\date{October 27, 2025}

\maketitle

\begin{abstract}
We study superselection sectors in two-dimensional quantum spin systems with an on-site action of a compact abelian group $G$. Naaijkens and Ogata (2022) showed that for states quasi-equivalent to a product state, the superselection structure is trivial, reflecting the absence of long-range entanglement. We consider a symmetry-compatible refinement of this setting, in which both the superselection criterion and the notion of equivalence between representations are required to respect the $G$-action. Under this stricter notion of equivalence, the sector structure for a $G$-equivariant product representation becomes nontrivial: the $G$-equivariant superselection sectors are classified by elements of the Pontryagin dual $\widehat{G}$. This shows that even in phases without long-range entanglement, imposing symmetry compatibility can lead to nontrivial sector structure.
\end{abstract}

\section{Introduction}

Superselection theory provides an operator-algebraic framework for classifying physically distinct types of excitations in quantum many-body systems, originally developed in the relativistic setting by Doplicher, Haag, and Roberts \cite{twistedHaagDualityCitation1,twistedHaagDualityCitation2}.
Analogous ideas have since been adapted to quantum spin systems on lattices, such as in the study of anyonic excitations and topological order in gapped phases~\cite{Naaijkens2011LocalizedEndomorphisms,Naaijkens2012HaagDualityToricCode,FielderNaaijkensHaagDualityKitaevQuantumDouble2015,completeSetOfInfntVolGrndStateForKitaevAbelianQDble,ogata2021derivation}.

In this setting, Naaijkens, Ogata, and collaborators have applied DHR-style analysis to anyonic excitations in quantum spin systems with gapped ground states~\cite{Naaijkens2011LocalizedEndomorphisms,Naaijkens2012HaagDualityToricCode,StablilityOfChargesInInfntQntmSpnSysCNN,NaaijkensOgataApproxSplit2022,ogata2021derivation}.
Ogata~\cite{ogata2021derivation} developed a general construction of the braided $C^*$-tensor category of superselection sectors for pure gapped ground states satisfying approximate Haag duality.
Naaijkens and Ogata~\cite{NaaijkensOgataApproxSplit2022} showed that in two-dimensional systems whose ground states are quasi-equivalent to a product state, the superselection structure is trivial.
They further showed that a state related to such a ground state by a path of automorphisms arising from quasi-local dynamics satisfying suitable conditions, will also have a trivial superselection structure.
By doing so, they showed that the absence of long-range entanglement implies a trivial superselection structure.


In the present work we consider an alteration of this setting that incorporates on-site symmetries. We restrict to systems with an on-site action of a compact abelian group~$G$, and require both the superselection criterion and the notion of equivalence between representations to be compatible with this symmetry. Under this stricter notion of equivalence, the sector structure for a $G$-equivariant product representation is no longer trivial: the sectors are labeled by elements of the Pontryagin dual~$\widehat{G}$ of the symmetry group. This shows that even in phases without long-range entanglement, imposing a symmetry-compatible notion of equivalence can lead to nontrivial superselection structure.
Related analyses of on-site symmetries and associated indices, including classifications of symmetric states in the trivial phase, can be found in~\cite{SymIndex_Ogata_2021}.

\emph{Acknowledgments.} This work was carried out during the author's graduate studies at the University of California, Davis, whose support is gratefully acknowledged. The author was supported in part by the NSF under grant DMS-2407290. The author also thanks Martin Fraas for valuable guidance and discussions.

\section{Preliminaries}\label{sec:prelim}
Let us first state the setting and define the main objects.

We assume the reader has some familiarity with the operator algebraic formulation of quantum spin systems (see e.g.~\cite{Bratteli_Robinson_1987}).

Let $(\Gamma, d:\Gamma \times \Gamma \to \br)$ be a discrete metric space.
For the purposes here, let $(\Gamma,d)$ be a Delone set in $\br^2$, especially a lattice.
For each $x \in \Gamma$, let $\calH_{\{x\}}$ be a finite-dimensional Hilbert space.
Suppose that there is a common upper bound on the dimension of these Hilbert spaces, i.e. that 
\begin{align}\label{agg}
\sup_{x\in\Gamma} \dim \caH_{x}<\infty.
\end{align}

\begin{Def}\label{systemOfOnSiteUnitaryGActions}
    For any topological group $G$, a \emph{\sysga{$G$}} consists of a collection of group homomorphisms $((g \in G) \mapsto (U_{\{x\},g}\in \calU(\calH_{\{x\}}))_{x \in \Gamma}$.
\end{Def}

Let $\calP_{0}(\Gamma)$ denote the set of all finite subsets of $\Gamma$.

For all $\Lambda \in \calP_{0}(\Gamma)$, define $$\calA_{\Lambda} := \bigotimes_{x \in \Lambda}\calB(\calH_{\{x\}})$$
For finite subsets $\Lambda_1 \subseteq \Lambda_2$ of $\Gamma$, there is a natural inclusion of $\calA_{\Lambda_1}$ into $\calA_{\Lambda_2}$. For $\Lambda \subseteq \Gamma$, define $$\calA_{\Lambda}^{\rm loc} := \bigcup_{X \in \calP_0(\Lambda)}\calA_X$$
and define $\calA_\Lambda$ to be the norm-closure of $\calA_{\Lambda}^{\rm loc}$.
Set $\calA := \calA_\Gamma$ and $\calA^{\rm loc} := \calA_{\Gamma}^{\rm loc}$.

For any topological group $G$ and any \sysga{$G$} $(U_{\{x\}\cxbullet }= ((g \in G) \mapsto (U_{\{x\},g}\in \calU(\calH_{\{x\}})))_{x \in \Gamma}$ , and any $\Lambda \in \calP_{0}(\Gamma)$, define
\[
U_{\Lambda,g} := \prod_{x \in \Lambda} U_{\{x\},g}, \quad \alpha_{\Lambda,g} := \Ad(U_{\Lambda,g}).
\]
Then
\[
(g \mapsto U_{\Lambda,g}) \in \Hom(G, \calU_\Lambda), \quad (g \mapsto \alpha_{\Lambda,g}) \in \Hom(G, \Aut(\calA_\Lambda)).
\]

For infinite $\Lambda \subseteq \Gamma$, define the action on $A \in \calA_{\Lambda}$, and more generally on $A \in \calA$, by the limit
\[
\alpha_{\Lambda,g}(A) := \lim_{\substack{\Lambda' \nearrow \Lambda \\ \text{finite}}} \alpha_{\Lambda',g}(A).
\]

\subsection{\texorpdfstring{$G$}{G}-covariant representations}

This subsection recalls the theory of $G$-covariant representations and introduces a $G$-equivariant version of the superselection criterion.

\begin{Def}
A state $\omega : \calA \to \mathbb{C}$ is called \emph{$G$-invariant} if
\[
\omega \circ \alpha_g = \omega \quad \text{for all } g \in G.
\]
\end{Def}

The following lemma is well-known:
\begin{Lem}\label{GNSRepOfGInvStateIsGCompatible}
Let $\omega$ be a $G$-invariant state, and let $(\calH_\omega, \pi_\omega, \Omega_\omega)$ be a GNS representation of $\omega$. Then there exists a continuous group homomorphism
\[
U^{(\pi_\omega)} = (g \mapsto U^{(\pi_\omega)}_g) : G \to \calU(\calH_\omega)
\]
such that for all $g \in G$,
\[
\pi_\omega \circ \alpha_g = \Ad(U^{(\pi_\omega)}_g) \circ \pi_\omega, \qquad U^{(\pi_\omega)}_g \Omega_\omega = \Omega_\omega.
\]
\end{Lem}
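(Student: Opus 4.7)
The plan is to construct $U^{(\pi_\omega)}_g$ on the dense subspace $\pi_\omega(\calA)\Omega_\omega \subseteq \calH_\omega$ by the formula
\[
U^{(\pi_\omega)}_g \pi_\omega(A)\Omega_\omega := \pi_\omega(\alpha_g(A))\Omega_\omega, \qquad A \in \calA,
\]
and extend by continuity. First I would verify well-definedness and isometry via
\[
\norm{\pi_\omega(\alpha_g(A))\Omega_\omega}^{2} = \omega(\alpha_g(A^{*}A)) = \omega(A^{*}A) = \norm{\pi_\omega(A)\Omega_\omega}^{2},
\]
where the middle equality uses $G$-invariance of $\omega$. This computation also shows that if $\pi_\omega(A)\Omega_\omega = \pi_\omega(B)\Omega_\omega$ then $\pi_\omega(\alpha_g(A-B))\Omega_\omega = 0$, so the assignment depends only on the vector and not on the representative $A$.

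Next I would extend $U^{(\pi_\omega)}_g$ by continuity to an isometry on $\calH_\omega$. Because $U^{(\pi_\omega)}_{g^{-1}}$ furnishes a two-sided inverse on the dense subspace, the extension is unitary. The group law $U^{(\pi_\omega)}_{gh} = U^{(\pi_\omega)}_g U^{(\pi_\omega)}_h$ holds on the dense subspace from $\alpha_{gh} = \alpha_g \circ \alpha_h$, and then extends. Invariance $U^{(\pi_\omega)}_g \Omega_\omega = \Omega_\omega$ is immediate from $\alpha_g(\unit) = \unit$. The intertwining relation is checked on vectors $\pi_\omega(B)\Omega_\omega$:
\[
U^{(\pi_\omega)}_g \pi_\omega(A) U^{(\pi_\omega)}_{g^{-1}} \pi_\omega(B)\Omega_\omega = \pi_\omega\bigl(\alpha_g(A\,\alpha_{g^{-1}}(B))\bigr)\Omega_\omega = \pi_\omega(\alpha_g(A))\pi_\omega(B)\Omega_\omega,
\]
and then extended by density.

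The step I expect to be the principal obstacle is continuity of $g \mapsto U^{(\pi_\omega)}_g$, which I would interpret as strong operator continuity. Using the group law, it suffices to check continuity at the identity $e \in G$, and by density it suffices to evaluate on vectors $\xi = \pi_\omega(A)\Omega_\omega$, where
\[
\norm{U^{(\pi_\omega)}_g \xi - \xi} \leq \norm{\alpha_g(A) - A}.
\]
The task thus reduces to norm-continuity of $g \mapsto \alpha_g(A)$ on $\calA$. For $A \in \calA^{\rm loc}$, so that $A \in \calA_\Lambda$ for some $\Lambda \in \pg$, this is immediate from continuity of the finite product $g \mapsto U_{\Lambda,g}$ of on-site unitaries acting on a finite-dimensional Hilbert space, which is the natural regularity hypothesis on the \sysga{$G$} when $G$ is viewed as a topological group. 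A standard $\eps/3$ approximation of a general $A \in \calA$ by local elements then upgrades this to continuity on all of $\calA$, yielding the desired continuity of $U^{(\pi_\omega)}$.
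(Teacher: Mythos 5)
Your proposal is correct and is exactly the standard GNS covariance argument that the paper invokes by citing Bratteli--Robinson (Cor.~2.3.17 and \S4.3.1) without giving its own proof: define $U^{(\pi_\omega)}_g$ on $\pi_\omega(\calA)\Omega_\omega$ via $\alpha_g$, use $G$-invariance of $\omega$ for well-definedness and isometry, and reduce strong continuity to norm-continuity of $g \mapsto \alpha_g(A)$ on local elements. The only point worth flagging is that the last step requires continuity of the on-site homomorphisms $g \mapsto U_{\{x\},g}$, which you correctly identify as the implicit regularity hypothesis in Definition~\ref{systemOfOnSiteUnitaryGActions}.
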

See e.g. \cite[Cor.~2.3.17 and §4.3.1]{Bratteli_Robinson_1987}.

We now recall the standard notion of a $G$-covariant representation of a $C^*$-dynamical system.
\begin{Def}
Let $\beta : G \to \Aut(\mathcal{A})$ be a continuous group action.
A \emph{$G$-covariant representation} of $(\mathcal{A}, \beta)$ is a pair $(\pi, U^{(\pi)}_\xbullet)$ of a representation $\pi: \mathcal{A} \to \mathcal{B}(\mathcal{H}_\pi)$ and a strongly continuous unitary representation $U^{(\pi)}_\xbullet: G \to \mathcal{U}(\mathcal{H}_\pi)$ such that for all $g \in G$,
\[
\pi \circ \beta_g = \Ad(U^{(\pi)}_g) \circ \pi.
\]
A $G$-covariant representation $(\pi,U^{(\pi)}_\xbullet)$ of $(\calA,\beta)$ is called \emph{irreducible} if $\pi$ is irreducible.
\end{Def}

\begin{Rem}
By Lemma~\ref{GNSRepOfGInvStateIsGCompatible}, the GNS representation $(\pi_\omega,U^{(\pi_\omega)})$ associated with a $G$-invariant state $\omega$ is $G$-covariant.
\end{Rem}


\begin{Def}\label{def:GEquivariantMap}
    For two $G$-covariant representations $(\pi_1, U^{(\pi_1)}_\xbullet), ( \pi_2, U^{(\pi_2)}_\xbullet)$ of $(\calA,\beta:G \to \Aut(\calA))$, a bounded linear map $T : \calH_{\pi_1}\to \calH_{\pi_2}$ is a \emph{$G$-equivariant map} if
    $\forall A \in \calA, T \pi_1(A) = \pi_2(A) T$ and $\forall g \in G, T U^{(\pi_1)}_g = U^{(\pi_2)}_g T$.
\end{Def}

\begin{Def}\label{SuperSelectionCriterion}
    Let $\pi$ be a representation of $\calA$ serving as a reference representation.
    Another representation $\rho$ is said to satisfy the \emph{superselection criterion} with respect to $\pi$ if, for all cones $\Lambda$, there exists a unitary $V_{\rho,\Lambda} : \calH_\rho \to \calH_\pi$ such that $\Ad(V_{\rho,\Lambda})\circ \rho |_{\calA_{\Lambda^c}} = \pi|_{\calA_{\Lambda^c}}$.
\end{Def}

We now define a version of the superselection criterion suitable for $G$-covariant representations:

\begin{Def}\label{GEquiSSC}
    Let $(\pi, U^{(\pi)}_\xbullet)$ be a $G$-covariant representation of $(\calA,\alpha)$ serving as the reference representation.
    Another $G$-covariant representation 
    $(\rho, U^{(\rho)}_\xbullet)$
    is said to satisfy the \emph{$G$-equivariant superselection criterion} with respect to $(\pi,U^{(\pi)}_\xbullet)$ if, for all cones $\Lambda$, there exists a unitary $V_{\rho,\Lambda} : \calH_\rho \to \calH_\pi$ that is $G$-equivariant (i.e. $V_{\rho,\Lambda} U^{(\rho)}_g = U^{(\pi)}_g V_{\rho,\Lambda}$ for all $g \in G$) and such that $\Ad(V_{\rho,\Lambda})\circ \rho |_{\calA_{\Lambda^c}} = \pi|_{\calA_{\Lambda^c}}$.
\end{Def}

\begin{Rem}
The $G$-equivariance condition expresses that the local identifications between sectors commute with the global symmetry, as one expects for physically realizable transformations.
\end{Rem}

\section{\texorpdfstring{$\widehat{G}$-grading}{G-hat grading}}\label{sec:GHatGrading}
Let $G$ be a compact abelian group, and let $\widehat{G}$ denote its Pontryagin dual, i.e. the group of continuous group homomorphisms $G \to U(1)$.
We show that the natural $G$-actions -- on local Hilbert spaces $\calH_\Lambda$, local algebras $\calA_\Lambda$, the global algebra $\calA$, Hilbert spaces $\calH_\pi$ of $G$-covariant representations, and the corresponding operator algebras $\calB(\calH_\pi)$ -- induce compatible $\widehat{G}$-gradings on each of these spaces. 
The projections onto the graded components arise as Fourier coefficients of the group action, viewed as a function $G \to \End(B)$ for the relevant Banach space $B$.

\begin{Def}\label{prjcDef}
    Let $B$ be one of $\calH_\Lambda, \calA_\Lambda, \calA, \calH_\pi, \calB(\calH_\pi)$, $G$ be a compact abelian group, and $f : G \to \Aut(B)\subseteq \End(B)$ the relevant $G$ action, and $\mu$ be the normalized Haar measure on $G$.
    Then, for $\phi \in \widehat{G}$, define
    \[
    P_\phi := \int_{g \in G} f(g) \phi(g^{-1}) \, d\mu(g)
    \]
\end{Def}

(For example, $P_\phi : \calH_\Lambda \to \calH_\Lambda$ is defined as $\int_{g \in G} U_{\Lambda,g} \phi(g^{-1}) \, d\mu(g)$.)

The following lemma is well-known:
\begin{Lem}[Orthogonality of Characters]\label{intOverGOfPhiIdentity}
    For any compact abelian group $G$, for any $\phi \in \widehat{G}$, $\int_{g\in G} \phi(g) d\mu(g) = \delta_{\phi,\hat{1}}$ where $\mu$ is the normalized Haar measure for $G$ and $\hat{1}$ is the identity element of $\widehat{G}$. 
\end{Lem}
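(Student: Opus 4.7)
The plan is a standard case split on whether $\phi$ is the trivial character of $\widehat{G}$, using only the defining properties of the Haar measure (normalization and left-invariance) together with the definition of a character.

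First I would dispatch the easy case $\phi = \hat{1}$: by definition of the identity element of $\widehat{G}$, the function $\phi$ is constantly $1$, so
\[
\int_{g\in G}\phi(g)\,d\mu(g) \;=\; \int_{g\in G} 1\,d\mu(g) \;=\; \mu(G) \;=\; 1,
\]
using that $\mu$ is the \emph{normalized} Haar measure.

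For the remaining case $\phi \neq \hat{1}$, the strategy is to exhibit $h \in G$ with $\phi(h)\neq 1$ and then exploit translation invariance of $\mu$ to force the integral to vanish. Since $\phi$ is not the trivial character, some $h \in G$ satisfies $\phi(h)\neq 1$. Writing $I := \int_{g\in G}\phi(g)\,d\mu(g)$ and using left-invariance of $\mu$ under the substitution $g \mapsto hg$, together with the homomorphism property $\phi(hg) = \phi(h)\phi(g)$, I get
\[
I \;=\; \int_{g\in G}\phi(hg)\,d\mu(g) \;=\; \phi(h)\int_{g\in G}\phi(g)\,d\mu(g) \;=\; \phi(h)\, I.
\]
Hence $(1-\phi(h))\,I = 0$, and since $\phi(h)\neq 1$ the factor $(1-\phi(h))$ is a nonzero complex scalar, forcing $I = 0$. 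Combining both cases gives $\int_{g\in G}\phi(g)\,d\mu(g) = \delta_{\phi,\hat{1}}$.

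There is no real obstacle here; the only points that need a brief justification are (i) that $\phi\neq\hat{1}$ really does provide an $h$ with $\phi(h)\neq 1$ (immediate from the definition of $\widehat{G}$ as a pointwise function group), and (ii) the legitimacy of the change of variables, which is exactly the translation invariance characterizing Haar measure on a compact (hence unimodular) abelian group. Both are standard facts invoked without further elaboration.
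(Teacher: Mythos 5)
Your proof is correct and is the standard translation-invariance argument for character orthogonality; the paper itself states this lemma as well-known and gives no proof, so there is nothing to compare against. Both cases are handled properly, and the two points you flag for justification (existence of $h$ with $\phi(h)\neq 1$, and the change of variables $g\mapsto hg$) are indeed the only ones requiring comment.
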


\begin{Prop}\label{gradePhiProjectionsOnBanachSpace}
    Let $B$ be one of $\calH_\Lambda, \calA_\Lambda, \calA, \calH_\pi, \calB(\calH_\pi)$, and let $f : G \to \Aut(B)\subseteq \End(B)$ the relevant $G$ action, where $G$ is a compact abelian group, and let $\mu$ be its normalized Haar measure.
    Then, for $\phi \in \widehat{G}$,
    $P_\phi$ as defined in Definition~\ref{prjcDef} is well defined, and satisfies these properties:
    \begin{enumerate}
        \item[(a)] For all $g\in G$, $f(g) \circ P_\phi = \phi(g) \cdot P_\phi$
        \item[(b)] $P_\phi$ is a projection, and for $\phi_1,\phi_2 \in \widehat{G}$, $P_{\phi_1}\circ P_{\phi_2} = \delta_{\phi_1,\phi_2} P_{\phi_1}=\delta_{\phi_1,\phi_2}P_{\phi_2}$ (where $\delta_{\phi_1,\phi_2}$ is the Kronecker delta)
        \item[(c)] For $v \in B$, and $\phi \in \widehat{G}$, if for all $g \in G$, $f(g)(v) = \phi(g) v$, then $P_\phi(v) = v$.
    \end{enumerate}
\end{Prop}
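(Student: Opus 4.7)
The plan is to treat all five choices of $B$ uniformly, since in each of them the action $f$ is by isometries (unitaries on Hilbert spaces or $*$-automorphisms on $C^*$-algebras) and is strongly continuous in $g \in G$. For any $v \in B$, the map $g \mapsto f(g)(v)\phi(g^{-1})$ is then norm-continuous and bounded in norm by $\|v\|$, and compactness of $G$ together with the normalization $\mu(G)=1$ makes it Bochner-integrable. I would therefore interpret $P_\phi$ pointwise, via
\[
P_\phi v := \int_G f(g)(v)\,\phi(g^{-1})\,d\mu(g),
\]
and read off linearity and $\|P_\phi\| \le 1$ directly from standard properties of Bochner integrals.

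For (a), I would pull $f(g)$ inside the integral (permitted because $f(g)$ is bounded linear) to rewrite the integrand as $f(gh)\phi(h^{-1})$, then substitute $k = gh$. Left-invariance of $\mu$ combined with $\phi(h^{-1}) = \phi(k^{-1}g) = \phi(g)\phi(k^{-1})$ extracts the scalar $\phi(g)$ and leaves the integral defining $P_\phi$, yielding $f(g)\circ P_\phi = \phi(g)\cdot P_\phi$.

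For (b), a Fubini argument (valid by the same boundedness and continuity) together with the same change of variable in one of the integrals reduces $P_{\phi_1} P_{\phi_2}$ to $\bigl(\int_G (\phi_2\phi_1^{-1})(g)\,d\mu(g)\bigr)\, P_{\phi_2}$. By Lemma~\ref{intOverGOfPhiIdentity}, the scalar equals $\delta_{\phi_1,\phi_2}$, which yields both halves of the stated identity and, specialising to $\phi_1=\phi_2=\phi$, the idempotence $P_\phi^2 = P_\phi$. For (c), the hypothesis $f(g)v = \phi(g)v$ collapses the integrand to the constant $v$, and $P_\phi v = v$ follows from $\mu(G)=1$.

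The only delicate points are the well-definedness of the vector-valued integrals and the two interchanges (pulling $f(g)$ through the integral in (a), and applying Fubini in (b)); once strong continuity of the action is invoked to cast each integral as a Bochner integral on the relevant Banach space, every other step is a formal manipulation using translation invariance of Haar measure and character orthogonality, so I do not expect any genuine obstacle beyond bookkeeping of these analytic prerequisites.
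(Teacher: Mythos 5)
Your proposal is correct and follows essentially the same route as the paper: establish the Bochner integral's existence and the bound $\|P_\phi\|\le 1$, prove (a) by pulling $f(g)$ through the integral and using translation invariance of Haar measure, reduce (b) to the character-orthogonality lemma, and note (c) is immediate. The only cosmetic differences are that you read the integral pointwise (strongly) rather than as an operator-valued integral, and in (b) you expand a double integral with Fubini where the paper instead substitutes the identity from (a) into a single integral; both are routine and equivalent here.
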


\begin{proof}
    For each $\phi \in \widehat{G}$, $\norm{\phi(g^{-1})f(g)}=1$, and $\phi(g^{-1})f(g)$ is measurable, so the Bochner integral defining $P_\phi$ exists, and $\norm{P_\phi} \le 1$.
    

    \noindent (a) follows by pulling $f(g)$ inside the integral defining $P_\phi$.

    \noindent (b) follows from pulling $P_{\phi_2}$ inside the integral defining $P_{\phi_1}$, applying (a), and applying character orthogonality (Lemma~\ref{intOverGOfPhiIdentity}).

    \noindent (c) is immediate from the definition.

\end{proof}
(For example, for $B=\calA$, $f(g) = \alpha_g$, and so for $\phi_1, \phi_2 \in \widehat{G}$ and $P_{\phi_1}, P_{\phi_2} : \calA \to \calA$, and $A \in \calA$, $\alpha_g(P_{\phi_1}(A)) = \phi_1(g) P_{\phi_1}(A)$, and $P_{\phi_1}(P_{\phi_2}(A)) = \delta_{\phi_1,\phi_2} P_{\phi_1}(A)$.)

\begin{Rem}
    When $G$ is finite, it can be seen that $\sum_{\phi \in \widehat{G}} P_\phi = \id$.
    For infinite $G$, analogous statements hold for many of these spaces:
    \begin{enumerate}
        \item If $B$ is finite-dimensional, namely $\calH_\Lambda$ or $\calA_\Lambda$ (for finite $\Lambda$), then this sum has only finitely many non-zero terms \ifuc{and therefore converges unconditionally,}\fi and the equality holds.
        \item For any $X \in \calA^{\rm loc}$, the sum $\sum_{\phi \in \widehat{G}} P_\phi(X) = X$, with this sum also having only finitely many non-zero terms.
        \item For any $v \in \calH_\pi$, $\sum_{\phi \in \widehat{G}} P_\phi(v) = v$, with \ifuc{unconditional}\fi convergence in norm (this will be justified in the next subsection via the Peter-Weyl theorem).
    \end{enumerate}
    Moreover, in the Hilbert spaces $\calH_\Lambda$ and $\calH_\pi$ (and $\calA_\Lambda$ using the Hilbert-Schmidt inner product) the ranges of the different projections are orthogonal.

    
\end{Rem}

\begin{Lem}
    Suppose $B$ is one of $\calA_\Lambda, \calA, \calB(\calH_\pi)$ so that $f : G \to \Aut(B)$ is an algebra action.
    Then, if $a, b \in B$ are homogeneous of grades $\phi_1, \phi_2 \in \widehat{G}$ respectively   (i.e. $f(g)(a) = \phi_1(g) a$ and $f(g)(b)=\phi_2(g) b$ for all $g \in G$), then $a b$ is homogeneous of grade $\phi_1 \phi_2$.
\end{Lem}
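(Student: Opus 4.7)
The plan is to verify the grading condition directly from the definition, using only the fact that $f(g)$ is an algebra automorphism and that characters are multiplicative. The lemma does not need any of the Fourier/projection machinery from Proposition~\ref{gradePhiProjectionsOnBanachSpace}; it just needs the definition of ``homogeneous of grade $\phi$''.

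First, I would fix an arbitrary $g \in G$ and compute $f(g)(ab)$. Since $f(g) \in \Aut(B)$ is multiplicative, $f(g)(ab) = f(g)(a)\, f(g)(b)$. Then I substitute the homogeneity hypotheses $f(g)(a) = \phi_1(g)\, a$ and $f(g)(b) = \phi_2(g)\, b$, and pull the scalars out to get $\phi_1(g)\phi_2(g)\, ab$. Because $\phi_1,\phi_2 \in \widehat{G}$ are group homomorphisms into $U(1)$, their pointwise product $\phi_1\phi_2$ is again a character, with $(\phi_1\phi_2)(g) = \phi_1(g)\phi_2(g)$. Thus $f(g)(ab) = (\phi_1\phi_2)(g)\, ab$ for every $g$, which is exactly the statement that $ab$ is homogeneous of grade $\phi_1\phi_2$.

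There is no real obstacle here; the only thing to double-check is that in each of the three cases $B \in \{\calA_\Lambda,\calA,\calB(\calH_\pi)\}$ the action $f$ is by $*$-algebra automorphisms (so multiplicativity of $f(g)$ holds). For $\calA_\Lambda$ and $\calA$, $f(g) = \alpha_g$ is the Bogoliubov automorphism induced by the on-site unitaries, so this is immediate; for $\calB(\calH_\pi)$ with $f(g) = \Ad(U^{(\pi)}_g)$, it is conjugation by a unitary and hence also an algebra automorphism. So the same one-line calculation covers all three cases uniformly.
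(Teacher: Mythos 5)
Your proof is correct and matches the paper's argument, which simply records this computation as ``immediate'': apply multiplicativity of $f(g)$, substitute the homogeneity hypotheses, and use that characters multiply pointwise. Nothing further is needed.
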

\begin{proof}
    (Immediate.)
\end{proof}

\subsection{\texorpdfstring{$\widehat{G}$-grading for $G$-covariant representations}{G-hat grading for G-covariant representations}}\label{sec:GHatGradingOnHilb}
For any $G$-covariant representations $(\calH_\pi, \pi, U_\xbullet^{(\pi)})$ of $(\mathcal{A}, \alpha)$, $(\calH_\pi, U_\xbullet^{(\pi)})$ is a continuous unitary representation of $G$ on a separable Hilbert space, i.e. a $G$-Hilbert space.



By the Peter–Weyl theorem \cite{PeterWeylTheoremSource}, for any compact group $G$ and any continuous unitary representation $U : G \to \mathcal{U}(\mathcal{H})$ on a separable Hilbert space $\mathcal{H}$, there is a decomposition of $\mathcal{H}$ into a direct sum of finite-dimensional irreducible subrepresentations. In particular, when $G$ is abelian, each irreducible subrepresentation is one-dimensional. Thus,
\[
\mathcal{H} = \bigoplus_{\phi \in \widehat{G}} \mathcal{H}_\phi,
\quad \text{where } 
\mathcal{H}_\phi := \{ v \in \mathcal{H} \mid \forall g \in G,\; U(g) v = \phi(g) v \}.
\]

For each $\phi \in \widehat{G}$, let $p_\phi$ denote the orthogonal projection onto $\mathcal{H}_\phi$. 
\begin{Lem}[From the Peter-Weyl theorem]\label{PWDecomp}
    These subspaces are mutually orthogonal, and every vector $v \in \mathcal{H}$ can be written as
\[
v = \sum_{\phi \in \widehat{G}} p_\phi v,
\]
with \ifuc{unconditional}\fi norm convergence.
\end{Lem}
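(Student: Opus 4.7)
The plan is to derive this lemma as a direct corollary of the Peter--Weyl theorem, using the abelian hypothesis to collapse all irreducibles to one-dimensional character representations. The only non-formal step is verifying that the concretely defined character eigenspaces $\mathcal{H}_\phi$ in the statement coincide with the isotypic components produced abstractly by Peter--Weyl.

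First I would apply Peter--Weyl to the continuous unitary representation $(U,\mathcal{H})$ of the compact group $G$ to obtain a Hilbert space direct sum decomposition $\mathcal{H} = \bigoplus_{i \in I} V_i$ into finite-dimensional irreducible subrepresentations. Since $G$ is abelian, Schur's lemma forces $\dim V_i = 1$ for every $i$, and $U$ restricted to $V_i$ is given by a character $\phi_i \in \widehat{G}$. Regrouping the $V_i$ according to their character yields closed isotypic subspaces $W_\phi := \overline{\bigoplus_{i : \phi_i = \phi} V_i}$, and Peter--Weyl then gives $\mathcal{H} = \bigoplus_{\phi \in \widehat{G}} W_\phi$ as a Hilbert direct sum.

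Next I would show $W_\phi = \mathcal{H}_\phi$. The inclusion $W_\phi \subseteq \mathcal{H}_\phi$ is immediate, and $\mathcal{H}_\phi$ is closed as an intersection over $g \in G$ of closed eigenspaces of the unitary $U(g)$ for the eigenvalue $\phi(g)$. For the reverse inclusion, I would write any $v \in \mathcal{H}_\phi$ as $v = \sum_\psi w_\psi$ with $w_\psi \in W_\psi$; applying $U(g)$ and comparing the two expansions of $U(g) v$ term by term, using the uniqueness of the Hilbert-direct-sum decomposition, forces $(\psi(g) - \phi(g)) w_\psi = 0$ for all $g \in G$ and all $\psi \in \widehat{G}$, so $w_\psi = 0$ whenever $\psi \neq \phi$, giving $v = w_\phi \in W_\phi$.

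Mutual orthogonality then follows from a one-line character computation: for $\phi_1 \neq \phi_2$ choose $g_0$ with $\phi_1(g_0) \neq \phi_2(g_0)$, and for $v \in \mathcal{H}_{\phi_1}$, $w \in \mathcal{H}_{\phi_2}$ unitarity gives $\langle v, w \rangle = \langle U(g_0) v, U(g_0) w \rangle = \overline{\phi_1(g_0)} \phi_2(g_0) \langle v, w \rangle$, forcing $\langle v, w \rangle = 0$. The expansion $v = \sum_\phi p_\phi v$ with norm convergence is then just the canonical decomposition of $v$ in the Hilbert direct sum $\bigoplus_\phi \mathcal{H}_\phi$; unconditional convergence is automatic in that setting. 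The only genuine content beyond the cited Peter--Weyl theorem is the identification $W_\phi = \mathcal{H}_\phi$, which I would flag as the main (if very mild) obstacle; everything else is routine.
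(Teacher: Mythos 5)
Your proposal is correct and follows exactly the route the paper takes: the paper gives no written proof, simply asserting the decomposition as a consequence of the Peter--Weyl theorem together with the abelian/Schur reduction to one-dimensional irreducibles. The extra detail you supply --- the identification of the abstract isotypic components $W_\phi$ with the concrete eigenspaces $\mathcal{H}_\phi$, and the direct orthogonality computation --- is exactly the routine content the paper leaves implicit, and it is carried out correctly.
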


These projections are equal to the ones defined on $\calH_\pi$ in Definition~\ref{prjcDef}, so they can be written in integral form as
\[
p_\phi v = \int_{G} \phi(g^{-1}) \, U_g v \, d\mu(g),
\]
where $\mu$ is the normalized Haar measure on $G$.
Here, $p_\phi$ is used to distinguish from the projections $P_\phi : \calB(\calH_\pi) \to \calB(\calH_\pi)$,\[
P_\phi(A) := \int_G \phi(g^{-1}) \, \mathrm{Ad}(U_g)(A) \, d\mu(g)\,.
\]

These projections on $\calB(\calH_\pi)$ and on $\calH_\pi$ are compatible in the following sense:
\begin{Lem}\label{gradedComponentOfOpOnGradedComponentOfVec}
     Let $\phi_1,\phi_2 \in \widehat{G}$, $A \in \calB(\calH)$ and $v \in \calH$. Then:
     \[
     P_{\phi_1}(A) \cdot p_{\phi_2} v = p_{\phi_1\phi_2} \, A \, p_{\phi_2} v.
     \]
\end{Lem}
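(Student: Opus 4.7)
The plan is to unfold $P_{\phi_1}(A)$ and $p_{\phi_2}v$ via their integral definitions, use that $p_{\phi_2}v$ lies in the $\phi_2$-homogeneous subspace of $\calH_\pi$, and recognize the resulting integral as $p_{\phi_1\phi_2}(A\,p_{\phi_2}v)$.

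More precisely, I would proceed as follows. First, since evaluation at the fixed vector $p_{\phi_2}v$ is a bounded linear map $\calB(\calH_\pi) \to \calH_\pi$, I may commute it past the Bochner integral defining $P_{\phi_1}(A)$, obtaining
\[
P_{\phi_1}(A)\cdot p_{\phi_2}v = \int_G \phi_1(g^{-1})\,U_g\, A\, U_g^{*}\, p_{\phi_2}v \, d\mu(g).
\]
Second, by Proposition~\ref{gradePhiProjectionsOnBanachSpace}(a) applied to the $G$-action on $\calH_\pi$, the vector $p_{\phi_2}v$ is $\phi_2$-homogeneous, so $U_g^{*}\,p_{\phi_2}v = \phi_2(g^{-1})\,p_{\phi_2}v$ for all $g\in G$. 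Substituting and combining the characters via $\phi_1(g^{-1})\phi_2(g^{-1}) = (\phi_1\phi_2)(g^{-1})$ gives
\[
P_{\phi_1}(A)\cdot p_{\phi_2}v = \int_G (\phi_1\phi_2)(g^{-1})\,U_g\,\bigl(A\,p_{\phi_2}v\bigr)\, d\mu(g).
\]
Third, the right-hand side is exactly the integral form of $p_{\phi_1\phi_2}$ applied to the vector $A\,p_{\phi_2}v$, which yields the claim.

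I do not anticipate a real obstacle here. The only technical point worth being explicit about is the first step, namely the interchange of the integral with evaluation at a vector; this is justified by continuity of the evaluation map together with standard properties of the Bochner integral. Everything else reduces to the homogeneity property in Proposition~\ref{gradePhiProjectionsOnBanachSpace}(a) and the multiplicativity of characters.
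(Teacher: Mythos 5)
Your proof is correct and follows essentially the same route as the paper's: pull the evaluation at $p_{\phi_2}v$ inside the Bochner integral, use $U_g^*\,p_{\phi_2}v = \phi_2(g^{-1})\,p_{\phi_2}v$, combine the characters, and recognize the resulting integral as $p_{\phi_1\phi_2}$ applied to $A\,p_{\phi_2}v$. No gaps.
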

\begin{proof}

\iftrue{
    \begin{align*}
        P_{\phi_1}(A) \cdot p_{\phi_2} v
        &= \int_{g \in G}\phi_1(g^{-1}) \Ad(U_g)(A) d\mu(g) p_{\phi_2} v\\
        &= \int_G \phi_1(g^{-1}) \, U_g A U_g^* \, p_{\phi_2} v \, d\mu(g) \\
        &= \int_G \phi_1(g^{-1}) \, U_g A \, \underbrace{U_g^* p_{\phi_2} v}_{= \phi_2(g^{-1}) p_{\phi_2} v} \, d\mu(g) \\
        &= \int_G \phi_1(g^{-1}) \phi_2(g^{-1}) \, U_g A p_{\phi_2} v \, d\mu(g) \\
        &= \int_G (\phi_1 \phi_2)(g^{-1}) \, U_g A p_{\phi_2} v \, d\mu(g) \\
        &= p_{\phi_1 \phi_2} \, A \, p_{\phi_2} v.
    \end{align*}
}\fi

\end{proof}

\begin{Def}\label{DefOfGradeForMapsBetweenSpaces}
    For $(\calH_1, U^{(1)} : G \to \calU(\calH_1))$ and $(\calH_2, U^{(2)} : G \to \calU(\calH_2))$ two Hilbert spaces equipped with a continuous unitary $G$-action, a bounded linear map $T : \calH_1 \to \calH_2$ is said to be homogeneous of grade $\phi \in \widehat{G}$ if $U^{(2)}_gT (U^{(1)}_g)^* = \phi(g) T$ for all $g \in G$.
    
    Note that when $(\calH_1, U^{(1)}) = (\calH_2, U^{(2)})$ this is equivalent to $P_\phi(T)=T$.
\end{Def}

\begin{Lem}\label{ProductOfHomogOpsGradesAdd}
    For $i=1,2,3$ let $(\calH_i, U^{(i)} : G \to \calU(\calH_i))$ be Hilbert spaces equipped with a continuous unitary $G$-action.
    Let $T : \calH_1 \to \calH_2$ be a bounded linear map which is homogeneous of grade $\phi \in \widehat{G}$, and let $S : \calH_2 \to \calH_3$ be a bounded linear map which is homogeneous of grade $\psi \in \widehat{G}$.

    Then, $S \circ T : \calH_1 \to \calH_3$ is homogeneous of grade $\psi \phi$.
\end{Lem}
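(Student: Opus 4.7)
The plan is to verify the defining identity of Definition~\ref{DefOfGradeForMapsBetweenSpaces} for $S \circ T$ by a direct computation: expand $U^{(3)}_g (S \circ T) (U^{(1)}_g)^*$, insert a copy of $(U^{(2)}_g)^* U^{(2)}_g = \id_{\calH_2}$ in the middle between $S$ and $T$, and then apply the two given homogeneity relations.

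More precisely, I would first write
\[
U^{(3)}_g (S \circ T) (U^{(1)}_g)^* = \bigl(U^{(3)}_g S (U^{(2)}_g)^*\bigr)\bigl(U^{(2)}_g T (U^{(1)}_g)^*\bigr).
\]
By hypothesis on $S$, the first factor equals $\psi(g) S$, and by hypothesis on $T$, the second factor equals $\phi(g) T$. Since $\psi(g)$ and $\phi(g)$ are scalars in $U(1)$, they commute past $S$, and their product is $(\psi\phi)(g)$ because $\widehat{G}$ is the group of continuous homomorphisms $G \to U(1)$ under pointwise multiplication. This yields $(\psi\phi)(g)\,(S \circ T)$, as required.

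There is no serious obstacle: the argument is essentially the classical observation that gradings add under composition, and nothing about infinite dimensionality, continuity, or the Peter-Weyl decomposition enters beyond the fact that each $U^{(i)}_g$ is unitary (so that $(U^{(i)}_g)^* U^{(i)}_g = \id$). The only small bookkeeping point is to use the correct form of the definition (with conjugation by $U^{(i)}_g$) rather than the one-Hilbert-space version involving $P_\phi$, since the three Hilbert spaces $\calH_1,\calH_2,\calH_3$ may be distinct; this is already handled by Definition~\ref{DefOfGradeForMapsBetweenSpaces}.
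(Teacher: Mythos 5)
Your proposal is correct and is essentially identical to the paper's own proof: both insert $(U^{(2)}_g)^* U^{(2)}_g = \id_{\calH_2}$ between $S$ and $T$, apply the two homogeneity identities, and multiply the resulting scalars. Nothing further is needed.
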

\begin{proof}
    For all $g \in G$,
    \begin{align*}
        U^{(3)}_g S T \cdot (U^{(1)}_g)^* &= (U^{(3)}_g S \cdot  (U^{(2)}_g)^*) \, (U^{(2)}_g T \cdot (U^{(1)}_g)^*)\\
        &= (\psi(g) S) (\phi(g) T)\\
        &= \psi(g) \phi(g) S T = (\psi \phi)(g) S T.
    \end{align*}
\end{proof}

Now specialize to the case where $(\pi : \mathcal{A} \to \mathcal{B}(\mathcal{H}_\pi), U^{(\pi)} : G \to \mathcal{U}(\mathcal{H}_\pi))$ is a $G$-covariant representation of $(\mathcal{A}, \alpha)$, with $U := U^{(\pi)}$.

\begin{Lem}\label{homogVecNotKernelAndNonzeroGradedPartsOfOp}
    Let $A \in \mathcal{B}(\mathcal{H}_\pi)$. If $A \ne 0$, then:
    \begin{enumerate}
    \item There exists $\phi \in \widehat{G}$ such that $P_\phi(A) \ne 0$.
    \item For every such $\phi$, there exists 
    a non-zero homogenous vector $u \in \calH_\pi$ of grade $\phi_1 \in \widehat{G}$ such that $A\,u\ne 0$ and $P_\phi(A) \,u \ne 0$.
    \end{enumerate}
\end{Lem}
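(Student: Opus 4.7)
My plan is to read off both parts from Lemma~\ref{gradedComponentOfOpOnGradedComponentOfVec}, which states $P_{\phi_1}(A)\, p_{\phi_2} v = p_{\phi_1\phi_2}\, A\, p_{\phi_2} v$, combined with the Peter--Weyl decomposition (Lemma~\ref{PWDecomp}). Once every vector is broken into homogeneous components, an operator $A$ is determined by how it sends each $\calH_\pi$-homogeneous piece into the various $\calH_\pi$-homogeneous pieces of its image, and each such block is exactly $P_\phi(A)$ applied to a homogeneous vector.

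For (1), I argue the contrapositive: assume $P_\phi(A) = 0$ for every $\phi \in \widehat{G}$ and deduce $A = 0$. Fix any $v \in \calH_\pi$. By Lemma~\ref{PWDecomp}, $v = \sum_{\phi_2} p_{\phi_2} v$ with norm convergence, and since $A$ is bounded, $Av = \sum_{\phi_2} A\, p_{\phi_2} v$. For each $\phi_2$, applying Peter--Weyl to $A\, p_{\phi_2} v$ gives $A\, p_{\phi_2} v = \sum_{\phi_3} p_{\phi_3}\, A\, p_{\phi_2} v$, and Lemma~\ref{gradedComponentOfOpOnGradedComponentOfVec} rewrites each summand as $P_{\phi_3 \phi_2^{-1}}(A)\, p_{\phi_2} v$, which vanishes by hypothesis. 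Hence $A\, p_{\phi_2} v = 0$ for every $\phi_2$, so $Av = 0$, and since $v$ is arbitrary, $A = 0$.

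For (2), fix $\phi$ with $P_\phi(A) \ne 0$ and choose $v \in \calH_\pi$ with $P_\phi(A) v \ne 0$. Decomposing via Peter--Weyl and applying Lemma~\ref{gradedComponentOfOpOnGradedComponentOfVec} term by term,
\[
P_\phi(A) v \;=\; \sum_{\phi_1 \in \widehat{G}} P_\phi(A)\, p_{\phi_1} v \;=\; \sum_{\phi_1 \in \widehat{G}} p_{\phi \phi_1}\, A\, p_{\phi_1} v.
\]
Since the left side is nonzero, at least one summand is nonzero; fix such a $\phi_1$ and set $u := p_{\phi_1} v$. Then $u$ is a nonzero homogeneous vector of grade $\phi_1$, and another application of Lemma~\ref{gradedComponentOfOpOnGradedComponentOfVec} gives $P_\phi(A) u = p_{\phi \phi_1}\, A\, u \ne 0$. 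The same identity immediately forces $Au \ne 0$, so $u$ has all the required properties.

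There is no real obstacle here; the content is bookkeeping. Peter--Weyl replaces $v$ by its homogeneous pieces and Lemma~\ref{gradedComponentOfOpOnGradedComponentOfVec} transfers the grading between the operator and the vector. The only mild technical point is the interchange of the infinite Peter--Weyl sum with $A$ in part (1), which is legitimate because the decomposition converges in norm and $A$ is bounded.
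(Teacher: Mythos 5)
Your proof is correct and uses essentially the same ingredients as the paper's: the Peter--Weyl decomposition of Lemma~\ref{PWDecomp} together with the intertwining identity of Lemma~\ref{gradedComponentOfOpOnGradedComponentOfVec}. The only cosmetic difference is that you prove part (1) by contraposition while the paper argues directly from a vector $v$ with $Av \neq 0$; part (2), including the observation that $P_\phi(A)u = p_{\phi\phi_1}Au \neq 0$ forces $Au \neq 0$, matches the paper's argument.
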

\begin{proof}
Let $A \ne 0$. Then there exists $v \in \calH_\pi$ with $Av \ne 0$.
Write $v = \sum_{\phi_1} p_{\phi_1}v$ (by Lemma~\ref{PWDecomp}) and $Av = \sum_{\phi_1} A p_{\phi_1} v$.
As this is non-zero, for some $\phi_1$ we have $Ap_{\phi_1}v \ne 0$, and then for some $\phi_2$ we also have $p_{\phi_2}Ap_{\phi_1}v \ne 0$.
Setting $\phi = \phi_2\phi_1^{-1}$, Lemma~\ref{gradedComponentOfOpOnGradedComponentOfVec} gives
\[
P_\phi(A)p_{\phi_1}v = p_{\phi\phi_1}Ap_{\phi_1}v = p_{\phi_2}Ap_{\phi_1}v \neq 0,
\]
so $P_\phi(A)\ne 0$.

For the second claim, now let $\phi$ be any element of $\widehat{G}$ such that $P_\phi(A) \neq 0$. Apply the above reasoning with $P_\phi(A)$ in place of $A$, and set $u:=p_{\phi_1}v$.

\end{proof}

\begin{Lem}\label{OpHasOnlyOneGradedComponentImpliesHomogeneous} 
    If $A \in \calB(\calH_\pi)$ and there is exactly one $\phi \in \widehat{G}$ such that $P_\phi(A) \ne 0$, then $P_\phi(A)=A$, i.e. $A$ is homogeneous of grade $\phi$.
\end{Lem}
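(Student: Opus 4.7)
The plan is to consider the ``remainder'' operator $B := A - P_\phi(A)$ and show $B = 0$, from which the conclusion $A = P_\phi(A)$ is immediate.

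First I would compute the graded components of $B$. By linearity, for any $\psi \in \widehat{G}$,
\[
P_\psi(B) = P_\psi(A) - P_\psi(P_\phi(A)).
\]
Using Proposition~\ref{gradePhiProjectionsOnBanachSpace}(b), the second term equals $\delta_{\psi,\phi}\, P_\phi(A)$. So $P_\phi(B) = P_\phi(A) - P_\phi(A) = 0$, and for $\psi \ne \phi$ we have $P_\psi(B) = P_\psi(A)$, which vanishes by the hypothesis that $\phi$ is the unique character with $P_\phi(A) \ne 0$. Hence $P_\psi(B) = 0$ for every $\psi \in \widehat{G}$.

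Now I would invoke the contrapositive of Lemma~\ref{homogVecNotKernelAndNonzeroGradedPartsOfOp}(1): if $B \ne 0$, then there exists some $\psi \in \widehat{G}$ with $P_\psi(B) \ne 0$. Since we have just shown that no such $\psi$ exists, we conclude $B = 0$, i.e.\ $A = P_\phi(A)$.

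There is no real obstacle here — the entire argument is a two-line calculation combining the orthogonality of the $P_\phi$'s (Proposition~\ref{gradePhiProjectionsOnBanachSpace}(b)) with the faithfulness statement in Lemma~\ref{homogVecNotKernelAndNonzeroGradedPartsOfOp}(1). The only thing worth being careful about is that Lemma~\ref{homogVecNotKernelAndNonzeroGradedPartsOfOp}(1) is stated for $\calB(\calH_\pi)$, which is exactly the setting of the present lemma, so it applies directly.
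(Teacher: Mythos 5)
Your proof is correct, and it takes a genuinely different route from the one in the paper. The paper argues directly at the level of vectors: it decomposes an arbitrary $v\in\calH_\pi$ via the Peter--Weyl sum $v=\sum_{\phi_1}p_{\phi_1}v$ (Lemma~\ref{PWDecomp}), uses the compatibility relation $p_{\phi_2}Ap_{\phi_1}v=P_{\phi_2\phi_1^{-1}}(A)\,p_{\phi_1}v$ from Lemma~\ref{gradedComponentOfOpOnGradedComponentOfVec} to show $Ap_{\phi_1}v=P_\phi(A)p_{\phi_1}v$ for every $\phi_1$, and then sums over $\phi_1$ to get $Av=P_\phi(A)v$. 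You instead work entirely at the level of the projections $P_\psi$ on $\calB(\calH_\pi)$: you form $B=A-P_\phi(A)$, kill all its graded components using linearity and the orthogonality relation of Proposition~\ref{gradePhiProjectionsOnBanachSpace}(b), and then conclude $B=0$ from the faithfulness statement in Lemma~\ref{homogVecNotKernelAndNonzeroGradedPartsOfOp}(1). Each step checks out: $B\in\calB(\calH_\pi)$, so that lemma applies, and $P_\psi(P_\phi(A))=\delta_{\psi,\phi}P_\phi(A)$ is exactly what part~(b) provides. Your argument is shorter and more conceptual, at the cost of leaning on Lemma~\ref{homogVecNotKernelAndNonzeroGradedPartsOfOp}, whose own proof already contains the Peter--Weyl decomposition and the compatibility lemma; the paper's version simply inlines that machinery rather than routing through the faithfulness statement. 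In effect the two proofs rest on the same underlying facts, packaged at different levels of abstraction, and both are valid.
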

\begin{proof}

Suppose $A$ has exactly one nonzero component, say $P_\phi(A)\ne 0$.  
Let $v \in \calH_\pi$ and $\phi_1 \in \widehat{G}$. By Lemma~\ref{PWDecomp},
\[
A p_{\phi_1} v = \sum_{\phi_2 \in \widehat{G}} p_{\phi_2} A p_{\phi_1} v.
\]
By Lemma~\ref{gradedComponentOfOpOnGradedComponentOfVec}, each term equals
\[
p_{\phi_2} A p_{\phi_1} v = P_{\phi_2\phi_1^{-1}}(A) \, p_{\phi_1} v.
\]
Since $P_\phi(A)$ is the only nonzero component of $A$, the sum reduces to the unique term with $\phi_2 \phi_1^{-1} = \phi$. Thus
\[
A p_{\phi_1} v = P_\phi(A)\, p_{\phi_1} v.
\]

Now let $v \in \calH_\pi$. Again by Lemma~\ref{PWDecomp},
$v = \sum\limits_{\phi_1 \in \widehat{G}} p_{\phi_1} v$,
with norm convergence. Therefore
\[
Av = \sum_{\phi_1} A p_{\phi_1} v 
   = \sum_{\phi_1} P_\phi(A) p_{\phi_1} v
   = P_\phi(A) \sum_{\phi_1} p_{\phi_1} v
   = P_\phi(A) v.
\]
Hence $A = P_\phi(A)$.

\end{proof}

\subsection{Refining the \texorpdfstring{$\widehat{G}$}{G-hat}-grading}\label{SectionRefiningGHatGrading}\label{sec:refiningGHatGrading}

\begin{Def}
Let $((g \in G) \mapsto (U_{\{x\},g}\in \calU(\calH_{\{x\}}))_{x \in \Gamma}$ be a \sysga{$G$}, and let $\Lambda \subset \Gamma$. Then, for each $x \in \Gamma$, define
    \[
    \tilde{U}_{\{x\},(g_1,g_2)} := 
    \begin{cases}
        U_{\{x\},g_1} & \text{if } x \in \Lambda\\
        U_{\{x\},g_2} & \text{if } x \in \Lambda^c.
    \end{cases}
    \]
    So $(\tilde{U}_{\{x\}\cxbullet}: (G \times G) \to \calU(\calH_{\{x\}}))_{x \in \Gamma}$ is a \sysga{$G\times G$}.


    From this, for each finite region $\Lambda_1 \in \calP_0(\Gamma)$, define the maps $(g_1,g_2) \mapsto \tilde{U}_{\Lambda_1,(g_1,g_2)}$ and $(g_1,g_2) \mapsto \tilde{\alpha}_{\Lambda_1,(g_1,g_2)}$ just as for any \sysga{group}, and likewise define $(g_1,g_2) \mapsto \tilde{\alpha}_{(g_1,g_2)}$.
   
    Note that for all $g \in G$ and for all finite $\Lambda_1 \in \calP_0(\Gamma)$, that $\tilde{U}_{\Lambda_1,(g,g)}=U_{\Lambda_1,g}$. Also note that for all $g \in G$, $\tilde{\alpha}_{(g,g)}=\alpha_g$.
\end{Def}

The Pontryagin dual of $G \times G$ is (isomorphic to) the group $\widehat{G} \times \widehat{G}$.

With this \sysga{$G\times G$}, an element of $\calA_{\Lambda}$ which has grade $\phi \in \widehat{G}$ with respect to the grading obtained from the \sysga{$G\times G$}, has grade $(\phi,\hat{1}) \in \widehat{G}\times \widehat{G}$ with respect to the grading obtained from this \sysga{$G\times G$}. Likewise, an element of $\calA_{\Lambda^c}$ of grade $\phi\in \widehat{G}$ with respect to the grading from the $G$ action, has grade $(\hat{1},\phi) \in \widehat{G}\times \widehat{G}$ with respect to the grading from this $G \times G$ action.

This is because the $G \times G$ action acts on $\calA_\Lambda$ only by the first copy of $G$, and acts on $\calA_{\Lambda^c}$ only by the second copy of $G$. So, $\tilde{\alpha}_{(g,1)}|_{\calA_\Lambda}=\alpha_g|_{\calA_\Lambda}$ and $\tilde{\alpha}_{(1,g)}|_{\calA_{\Lambda^c}}=\alpha_g|_{\calA_{\Lambda^c}}$.

The purpose of the next few lemmas is largely in order to circumvent issues of convergence that may arise when trying to express operators $A \in \calB(\calH_\pi)$ as sums of their homogeneous components, either with respect to the $\widehat{G}$-grading or the $(\widehat{G}\times \widehat{G})$-grading. (These issues do not arise when $G$ is finite.)

\begin{Lem}\label{GHatProjsCompatWithGHatByGhatProjs}
    Let $(\pi:\calA\to\calB(\calH_\pi),U^{(\pi)}:G \to \calU(\calH_\pi))$ be a $G$-covariant representation of $(\calA,\alpha)$ and $(\pi:\calA\to\calB(\calH_\pi),\tilde{U}^{(\pi)} : G\times G\to\calU(\calH_\pi))$ be a $G\times G$-covariant representation of $(\calA,\tilde{\alpha})$. Suppose that $\forall g \in G, \tilde{U}^{(\pi)}_{(g,g)}=U^{(\pi)}_{g}$. Then, for all $\phi_1,\phi_2 \in \widehat{G}$,
    \[
    p_{\phi_1 \phi_2} p_{(\phi_1,\phi_2)}=p_{(\phi_1,\phi_2)}=p_{(\phi_1,\phi_2)} p_{\phi_1 \phi_2}\,,
    \] where the projections $p_{(\phi_1,\phi_2)}$ for $(\phi_1,\phi_2)\in \widehat{G}\times \widehat{G}$ are the projections onto the $(\phi_1,\phi_2)\in \widehat{G}\times \widehat{G}$ grade components of $\calH_\pi$, defined using $\tilde{U}^{(\pi)}$ just as the projections $p_{\phi}$ are defined using $U^{(\pi)}$.
\end{Lem}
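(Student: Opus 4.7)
The plan is to unwind the definitions and observe that the hypothesis $\tilde{U}^{(\pi)}_{(g,g)} = U^{(\pi)}_g$ forces the image of $p_{(\phi_1,\phi_2)}$ to lie inside the image of $p_{\phi_1\phi_2}$; both identities in the statement then follow by a one-line adjoint argument.

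First I would recall, using Lemma~\ref{PWDecomp} applied to each unitary representation, that $p_{(\phi_1,\phi_2)}$ is the orthogonal projection onto the closed subspace
\[
(\calH_\pi)_{(\phi_1,\phi_2)} = \{v \in \calH_\pi : \tilde{U}^{(\pi)}_{(g_1,g_2)} v = \phi_1(g_1)\phi_2(g_2)\, v \text{ for all } (g_1,g_2)\in G\times G\},
\]
and likewise $p_{\phi_1\phi_2}$ is the orthogonal projection onto
\[
(\calH_\pi)_{\phi_1\phi_2} = \{v \in \calH_\pi : U^{(\pi)}_g v = (\phi_1\phi_2)(g)\, v \text{ for all } g\in G\}.
\]

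The central step is the inclusion $(\calH_\pi)_{(\phi_1,\phi_2)} \subseteq (\calH_\pi)_{\phi_1\phi_2}$. Let $v \in (\calH_\pi)_{(\phi_1,\phi_2)}$ and restrict the $G\times G$-action to the diagonal $\{(g,g) : g \in G\}$. Using the compatibility hypothesis,
\[
U^{(\pi)}_g v \;=\; \tilde{U}^{(\pi)}_{(g,g)} v \;=\; \phi_1(g)\phi_2(g)\, v \;=\; (\phi_1\phi_2)(g)\, v,
\]
so $v \in (\calH_\pi)_{\phi_1\phi_2}$, as claimed.

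From this inclusion, $p_{\phi_1\phi_2}$ acts as the identity on the range of $p_{(\phi_1,\phi_2)}$, which gives the first equality $p_{\phi_1\phi_2} p_{(\phi_1,\phi_2)} = p_{(\phi_1,\phi_2)}$. Since both $p_{\phi_1\phi_2}$ and $p_{(\phi_1,\phi_2)}$ are orthogonal (hence self-adjoint) projections, taking adjoints of both sides yields the second equality $p_{(\phi_1,\phi_2)} p_{\phi_1\phi_2} = p_{(\phi_1,\phi_2)}$. I do not foresee a real obstacle here; the only thing to be careful about is to state clearly that the projections are the Peter--Weyl spectral projections (equivalently the ones from Definition~\ref{prjcDef}), so that self-adjointness and the subspace characterization are both available.
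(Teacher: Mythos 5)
Your proof is correct, but it takes a genuinely different route from the paper's. The paper argues by direct computation with the Bochner integral defining $p_{\phi_1\phi_2}$: it pulls $p_{(\phi_1,\phi_2)}$ inside the integral, uses $\tilde{U}^{(\pi)}_{(g,g)}p_{(\phi_1,\phi_2)}=(\phi_1\phi_2)(g)\,p_{(\phi_1,\phi_2)}$ to reduce the integrand to a constant, and then has to run a separate (and somewhat more laborious) argument for the identity $p_{(\phi_1,\phi_2)}\,\tilde{U}^{(\pi)}_{(g_1,g_2)}=(\phi_1,\phi_2)(g_1,g_2)\,p_{(\phi_1,\phi_2)}$ in order to get the second equality. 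You instead work at the level of ranges: the diagonal restriction $\tilde{U}^{(\pi)}_{(g,g)}=U^{(\pi)}_g$ immediately gives the inclusion $(\calH_\pi)_{(\phi_1,\phi_2)}\subseteq(\calH_\pi)_{\phi_1\phi_2}$, which yields the first equality, and self-adjointness of orthogonal projections gives the second by taking adjoints. Your version is shorter and dispenses with the integral manipulation entirely; its only extra ingredient is the identification of the integral-defined projections with the Peter--Weyl spectral projections onto the eigenspaces, which the paper asserts explicitly just after Lemma~\ref{PWDecomp}, so you are entitled to it. The adjoint trick in particular is a cleaner way to obtain the two-sided statement than the paper's hands-on verification of the left-multiplication eigenvalue identity.
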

\begin{proof}
    First to show that $p_{\phi_1 \phi_2} p_{(\phi_1,\phi_2)}=p_{(\phi_1,\phi_2)}$:
    \begin{align*}
        p_{\phi_1 \phi_2} p_{(\phi_1,\phi_2)} &= \int_{g\in G}(\phi_1 \phi_2)(g^{-1}) U^{(\pi)}_g d\mu(g) \, p_{(\phi_1,\phi_2)}\\
        &= \int_{g\in G}(\phi_1 \phi_2)(g^{-1}) \tilde{U}^{(\pi)}_{(g,g)} p_{(\phi_1,\phi_2)} d\mu(g) \, \\
        &= \int_{g\in G}(\phi_1 \phi_2)(g^{-1}) \,\;(\phi_1,\phi_2)(g,g) \; p_{(\phi_1,\phi_2)} d\mu(g)\\
        &= \int_{g\in G}p_{(\phi_1,\phi_2)}d\mu(g) = p_{(\phi_1,\phi_2)}.
    \end{align*}
    Showing that $p_{(\phi_1,\phi_2)}=p_{(\phi_1,\phi_2)} p_{\phi_1 \phi_2}$ is essentially the same, except that instead of using $\tilde{U}^{(\pi)}_{(g_1,g_2)} p_{(\phi_1,\phi_2)}=(\phi_1,\phi_2)(g_1,g_2) p_{(\phi_1,\phi_2)}$ to conclude that $U^{(\pi)}_g p_{(\phi_1,\phi_2)} = (\phi_1,\phi_2)(g,g) p_{(\phi_1,\phi_2)}$, it uses $p_{(\phi_1,\phi_2)} \tilde{U}^{(\pi)}_{(g_1,g_2)}=(\phi_1,\phi_2)(g_1,g_2) p_{(\phi_1,\phi_2)}$ to conclude that $ p_{(\phi_1,\phi_2)}U^{(\pi)}_g = (\phi_1,\phi_2)(g,g) p_{(\phi_1,\phi_2)}$. (One concludes that $p_{(\phi_1,\phi_2)} \tilde{U}^{(\pi)}_{(g_1,g_2)}=(\phi_1,\phi_2)(g_1,g_2) p_{(\phi_1,\phi_2)}$ from the fact that $\tilde{U}^{(\pi)}_{(g_1,g_2)} p_{(\phi_1,\phi_2)}=(\phi_1,\phi_2)(g_1,g_2) p_{(\phi_1,\phi_2)}$ and $p_{(\phi_1,\phi_2)} \tilde{U}^{(\pi)}_{(g_1,g_2)} v = p_{(\phi_1,\phi_2)} \tilde{U}^{(\pi)}_{(g_1,g_2)} \sum_{(\phi_1',\phi_2')\in \widehat{G}\times \widehat{G}} p_{(\phi_1',\phi_2')}v$ for all $v \in \calH_\pi$.)
\end{proof}

\begin{Lem}\label{nonzeroGHatByGHatCompImpliesNonzeroGHatComp}
    Let $(\pi:\calA\to\calB(\calH_\pi),U^{(\pi)}:G \to \calU(\calH_\pi))$ be a $G$-covariant representation of $(\calA,\alpha)$ and $(\pi:\calA\to\calB(\calH_\pi),\tilde{U}^{(\pi)} : G\times G\to\calU(\calH_\pi))$ be a $G\times G$-covariant representation of $(\calA,\tilde{\alpha})$. Suppose that $\forall g \in G, \tilde{U}^{(\pi)}_{(g,g)}=U^{(\pi)}_{g}$. Let $\phi_1,\phi_2 \in \widehat{G}$ and $A \in \calB(\calH_\pi)$.
    
    Then, if $P_{(\phi_1,\phi_2)}(A)\ne 0$, then $P_{\phi_1 \phi_2}(A) \ne 0$.
\end{Lem}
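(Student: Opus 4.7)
The plan is to establish the operator-level analog of Lemma~\ref{GHatProjsCompatWithGHatByGhatProjs}: the identity
\[
P_{(\phi_1,\phi_2)} \circ P_{\phi_1\phi_2} = P_{(\phi_1,\phi_2)}
\]
of bounded linear maps on $\calB(\calH_\pi)$. Granted this identity, the conclusion is immediate: if $P_{\phi_1\phi_2}(A) = 0$, then $P_{(\phi_1,\phi_2)}(A) = P_{(\phi_1,\phi_2)}(P_{\phi_1\phi_2}(A)) = 0$, contradicting the hypothesis; hence $P_{\phi_1\phi_2}(A)\ne 0$.

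To prove the identity, I would mirror the proof of Lemma~\ref{GHatProjsCompatWithGHatByGhatProjs} one level up. First, pull $P_{(\phi_1,\phi_2)}$ inside the Bochner integral defining $P_{\phi_1\phi_2}$ (valid since left composition with the bounded linear map $P_{(\phi_1,\phi_2)}$ on $\calB(\calH_\pi)$ is continuous). Then, use the hypothesis $U^{(\pi)}_g = \tilde{U}^{(\pi)}_{(g,g)}$ to rewrite $\Ad(U^{(\pi)}_g)$ as $\Ad(\tilde{U}^{(\pi)}_{(g,g)})$. A right-composition variant of Proposition~\ref{gradePhiProjectionsOnBanachSpace}(a) -- obtained by the same substitution trick in the Haar integral used to prove (a) itself -- yields
\[
P_{(\phi_1,\phi_2)} \circ \Ad(\tilde{U}^{(\pi)}_{(g,g)}) = \phi_1(g)\phi_2(g)\,P_{(\phi_1,\phi_2)} = (\phi_1\phi_2)(g)\,P_{(\phi_1,\phi_2)}.
\]
The integrand then becomes $(\phi_1\phi_2)(g^{-1})\cdot(\phi_1\phi_2)(g)\,P_{(\phi_1,\phi_2)} = P_{(\phi_1,\phi_2)}$, and integrating against normalized Haar measure on $G$ gives $P_{(\phi_1,\phi_2)}$, as required.

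I do not foresee any serious obstacle; the whole argument is a straightforward adaptation of the proof of Lemma~\ref{GHatProjsCompatWithGHatByGhatProjs} from the Hilbert-space projections $p_\phi$, $p_{(\phi_1,\phi_2)}$ to the operator-space projections $P_\phi$, $P_{(\phi_1,\phi_2)}$. The only minor subtlety is justifying the right-composition variant of Proposition~\ref{gradePhiProjectionsOnBanachSpace}(a), which is a one-line computation via translation invariance of the Haar measure and requires no new input. A slightly more conceptual alternative is to observe that $P_{(\phi_1,\phi_2)}(A)$ is, by construction, homogeneous of grade $(\phi_1,\phi_2)$ under the $G\times G$-action on $\calB(\calH_\pi)$; restricting to the diagonal via $U^{(\pi)}_g = \tilde{U}^{(\pi)}_{(g,g)}$ promotes this to homogeneity of grade $\phi_1\phi_2$ under the $G$-action, so Proposition~\ref{gradePhiProjectionsOnBanachSpace}(c) gives $P_{\phi_1\phi_2}(P_{(\phi_1,\phi_2)}(A)) = P_{(\phi_1,\phi_2)}(A)\ne 0$, and combining this with the same commutation identity closes the argument.
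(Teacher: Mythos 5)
Your proposal is correct, but it takes a genuinely different route from the paper. The paper descends to the Hilbert-space level: it invokes Lemma~\ref{homogVecNotKernelAndNonzeroGradedPartsOfOp} (for the $G\times G$-grading) to produce a homogeneous vector $v$ of some grade $(\psi_1,\psi_2)$ with $P_{(\phi_1,\phi_2)}(A)\,p_{(\psi_1,\psi_2)}v\neq 0$, then combines the vector-level compatibility $p_{(\psi_1,\psi_2)}=p_{\psi_1\psi_2}\,p_{(\psi_1,\psi_2)}$ of Lemma~\ref{GHatProjsCompatWithGHatByGhatProjs} with Lemma~\ref{gradedComponentOfOpOnGradedComponentOfVec} to exhibit a vector on which $P_{\phi_1\phi_2}(A)$ is nonzero. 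You instead stay entirely at the level of the projections on $\calB(\calH_\pi)$ and prove the operator-algebra analog of Lemma~\ref{GHatProjsCompatWithGHatByGhatProjs}, namely $P_{(\phi_1,\phi_2)}\circ P_{\phi_1\phi_2}=P_{(\phi_1,\phi_2)}$, from which the lemma is the contrapositive. Your key ingredient, the right-composition identity $P_{(\phi_1,\phi_2)}\circ\Ad(\tilde{U}^{(\pi)}_{(g,g)})=(\phi_1\phi_2)(g)\,P_{(\phi_1,\phi_2)}$, does follow from right-invariance of the (normalized, hence bi-invariant) Haar measure by the same substitution used for Proposition~\ref{gradePhiProjectionsOnBanachSpace}(a), and the interchange of $P_{(\phi_1,\phi_2)}$ with the Bochner integral is at the same level of rigor as the paper's own proof of Proposition~\ref{gradePhiProjectionsOnBanachSpace}(b). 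Your argument is shorter, avoids the vector-level machinery entirely, and proves a slightly stronger statement (the projection identity itself, which is of independent use); the paper's route has the advantage of reusing lemmas it needs elsewhere anyway and of sidestepping any discussion of integrals valued in $\End(\calB(\calH_\pi))$. Your closing ``conceptual alternative'' only yields the composition in the other order, $P_{\phi_1\phi_2}\circ P_{(\phi_1,\phi_2)}=P_{(\phi_1,\phi_2)}$, which by itself does not give the conclusion --- but you correctly flag that it must be combined with the commutation identity, so no gap results.
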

\begin{proof}
    Suppose $P_{(\phi_1,\phi_2)}(A) \ne 0$.
    Then, by Lemma~\ref{homogVecNotKernelAndNonzeroGradedPartsOfOp} there exists a homogeneous vector $v \in \calH_\pi$ of grade $(\psi_1,\psi_2)\in \widehat{G}\times \widehat{G}$ such that $P_{(\phi_1,\phi_2)}(A) v = P_{(\phi_1,\phi_2)}(A) p_{(\psi_1,\psi_2)}v \ne 0$.


    By Lemma~\ref{gradedComponentOfOpOnGradedComponentOfVec}, $P_{(\phi_1,\phi_2)}(A) p_{(\psi_1,\psi_2)}v = p_{(\phi_1,\phi_2)(\psi_1,\psi_2)} A p_{(\psi_1,\psi_2)} v$.

    By Lemma~\ref{GHatProjsCompatWithGHatByGhatProjs}, $p_{(\psi_1, \psi_2)}v = p_{\psi_1 \psi_2} p_{(\psi_1,\psi_2)}v$, so, (using $v=p_{(\psi_1,\psi_2)}v$) we have $v=p_{\psi_1 \psi_2} v$.

    Therefore,
    \begin{align*}
        p_{(\phi_1,\phi_2)(\psi_1,\psi_2)} P_{\phi_1 \phi_2}(A) v 
        &=p_{(\phi_1,\phi_2)(\psi_1,\psi_2)} P_{\phi_1 \phi_2}(A) p_{\psi_1 \psi_2} v\\
        &=p_{(\phi_1,\phi_2)(\psi_1,\psi_2)}p_{\phi_1\phi_2\psi_1\psi_2}A p_{\psi_1\psi_2}v\\
        &=p_{(\phi_1\psi_1,\phi_2\psi_2)}p_{\phi_1\phi_2\psi_1\psi_2}A p_{\psi_1\psi_2}v\\
        &=p_{(\phi_1\psi_1,\phi_2\psi_2)}A p_{(\psi_1,\psi_2)}v\\
        &=P_{(\phi_1,\phi_2)}(A) p_{(\psi_1,\psi_2)}v \ne 0.
    \end{align*}
    (The fourth equality is using Lemma~\ref{GHatProjsCompatWithGHatByGhatProjs} on the left part of the expression and $p_{\psi_1 \psi_2} v = v = p_{(\psi_1,\psi_2)}v$ on the right part of the expression.)
    

    So, $p_{(\phi_1,\phi_2)(\psi_1,\psi_2)} P_{\phi_1 \phi_2}(A) v \ne 0$, and therefore $P_{\phi_1 \phi_2}(A) \ne 0$.
\end{proof}

\begin{Lem}\label{GHatByGHatOnDoubleCommutant}
    Let $(\pi:\calA\to\calB(\calH_\pi),U^{(\pi)}:G \to \calU(\calH_\pi))$ be a $G$-covariant representation of $(\calA,\alpha)$ and $(\pi:\calA\to\calB(\calH_\pi),\tilde{U}^{(\pi)} : G\times G\to\calU(\calH_\pi))$ be a $G\times G$-covariant representation of $(\calA,\tilde{\alpha})$. Suppose that $\tilde{U}^{(\pi)}_{(g,g)} = U^{(\pi)}_g$ for all $g \in G$.

    Let $A \in \pi(\calA_\Lambda)''$ and $B \in \pi(\calA_{\Lambda^c})''$.
    Then,
\[
\Ad(\tilde{U}^{(\pi)}_{(g_1,g_2)})(A)=\Ad(U^{(\pi)}_{g_1})(A),
\qquad
\Ad(\tilde{U}^{(\pi)}_{(g_1,g_2)})(B)=\Ad(U^{(\pi)}_{g_2})(B),
\]
for all $(g_1,g_2) \in G\times G$.
    
    
    Furthermore, $P_{(\phi,\hat{1})}(A)=P_\phi(A)$ and $P_{(\hat{1},\phi)}(B)=P_\phi(B)$.
\end{Lem}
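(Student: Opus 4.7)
The plan is to first establish the two action equalities on the norm-dense subalgebra $\pi(\calA^{\rm loc}_\Lambda)$, then extend by norm continuity to $\pi(\calA_\Lambda)$, then extend by weak-operator continuity to the double commutant $\pi(\calA_\Lambda)''$. Once those are in hand, the projection identities fall out immediately from Fubini and the normalization of Haar measure.

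For the first step, let $a \in \calA^{\rm loc}_\Lambda$. Because the \sysga{$G \times G$} was constructed so that the second copy of $G$ acts only on sites outside $\Lambda$, we have $\tilde{\alpha}_{(g_1,g_2)}(a) = \alpha_{g_1}(a)$ for every such $a$. Applying the $G\times G$-covariance of $\pi$ on the left and the $G$-covariance on the right gives
\[
\Ad(\tilde{U}^{(\pi)}_{(g_1,g_2)})(\pi(a)) = \pi(\tilde{\alpha}_{(g_1,g_2)}(a)) = \pi(\alpha_{g_1}(a)) = \Ad(U^{(\pi)}_{g_1})(\pi(a)).
\]
Both maps $A\mapsto\Ad(\tilde{U}^{(\pi)}_{(g_1,g_2)})(A)$ and $A\mapsto \Ad(U^{(\pi)}_{g_1})(A)$ are norm-continuous (they are isometries of $\calB(\calH_\pi)$), so the equality extends to all $A \in \pi(\calA_\Lambda)$ by the norm density of $\pi(\calA^{\rm loc}_\Lambda)$ in $\pi(\calA_\Lambda)$. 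The analogous argument with $\Lambda^c$ in place of $\Lambda$ (now using only the second copy of $G$) handles elements of $\pi(\calA_{\Lambda^c})$.

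Next I would extend from $\pi(\calA_\Lambda)$ to $\pi(\calA_\Lambda)''$. Conjugation by a fixed unitary is a weak-operator-continuous (indeed ultraweakly continuous) $*$-automorphism of $\calB(\calH_\pi)$: if $A_\lambda \to A$ in WOT, then $\langle \xi, U A_\lambda U^* \eta\rangle = \langle U^*\xi, A_\lambda U^*\eta\rangle \to \langle U^*\xi, A U^*\eta\rangle = \langle \xi, U A U^*\eta\rangle$. By von Neumann's bicommutant theorem, $\pi(\calA_\Lambda)$ is WOT-dense in $\pi(\calA_\Lambda)''$; since the two WOT-continuous maps agree on this dense subset, they agree on all of $\pi(\calA_\Lambda)''$. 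The same reasoning yields the identity for $B \in \pi(\calA_{\Lambda^c})''$.

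Finally, with the action equalities in hand, the projection identities follow by direct computation. Using Fubini together with $(\phi,\hat{1})(g_1,g_2)=\phi(g_1)$ and $\int_G d\mu(g_2)=1$,
\begin{align*}
P_{(\phi,\hat{1})}(A)
&= \int_{G\times G} \overline{(\phi,\hat{1})(g_1,g_2)}\,\Ad(\tilde{U}^{(\pi)}_{(g_1,g_2)})(A)\,d(\mu\times\mu)(g_1,g_2) \\
&= \int_G\!\int_G \phi(g_1^{-1})\,\Ad(U^{(\pi)}_{g_1})(A)\,d\mu(g_1)\,d\mu(g_2) \\
&= P_\phi(A).
\end{align*}
The identity $P_{(\hat{1},\phi)}(B)=P_\phi(B)$ is obtained identically.

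The only point requiring genuine care is the WOT-extension step, since one must know that $\Ad(U)$ is normal. This is standard, and no Kaplansky density argument is needed because both maps are defined on all of $\calB(\calH_\pi)$ and are continuous there. Everything else is bookkeeping with the factorization of the $G\times G$-action across $\Lambda$ and $\Lambda^c$.
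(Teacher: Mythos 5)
Your proof is correct, but it takes a different route from the paper's. You establish the identity $\Ad(\tilde{U}^{(\pi)}_{(g_1,g_2)})(\pi(a))=\Ad(U^{(\pi)}_{g_1})(\pi(a))$ on the local algebra via covariance, then push it up to the double commutant in two density steps (norm density of $\pi(\calA^{\rm loc}_\Lambda)$ in $\pi(\calA_\Lambda)$, then WOT density of $\pi(\calA_\Lambda)$ in $\pi(\calA_\Lambda)''$, using that $\Ad(U)$ is WOT-continuous on all of $\calB(\calH_\pi)$). The paper instead uses a commutant trick: since $\tilde{\alpha}_{(1,g)}$ acts trivially on $\calA_\Lambda$, the unitary $\tilde{U}^{(\pi)}_{(1,g)}$ lies in $\pi(\calA_\Lambda)'=(\pi(\calA_\Lambda)'')'$, so it commutes with \emph{every} element of the double commutant purely algebraically; the general case then follows from the factorization $\tilde{U}^{(\pi)}_{(g_1,g_2)}=\tilde{U}^{(\pi)}_{(g_1,g_1)}\tilde{U}^{(\pi)}_{(1,g_1^{-1}g_2)}$ together with the hypothesis $\tilde{U}^{(\pi)}_{(g,g)}=U^{(\pi)}_g$. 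The paper's argument is shorter because placing the unitary in the commutant makes the extension to $\pi(\calA_\Lambda)''$ automatic, with no continuity or density considerations; your argument is more hands-on but buys a slightly stronger conclusion, since it never invokes the diagonal compatibility $\tilde{U}^{(\pi)}_{(g,g)}=U^{(\pi)}_g$ -- the two implementations agree on the double commutant simply because they implement the same automorphism of $\calA_\Lambda$. Your treatment of the projection identities via Fubini and normalization of the Haar measure is the same as the paper's.
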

\begin{proof}
    The arguments about $A \in \pi(\calA_\Lambda)''$ (involving the first copy of $G$ and $\widehat{G}$) and about $B \in \pi(\calA_{\Lambda^c})''$ (involving the second copies) are symmetric, so we give the proof only for $A$.
    
    Since $\tilde{\alpha}_{(1,g)}$ acts trivially on $\calA_\Lambda$, we have
$\Ad(\tilde{U}^{(\pi)}_{(1,g)})(\pi(a))=\pi(a)$ for all $a\in\calA_\Lambda$, hence $\tilde{U}^{(\pi)}_{(1,g)}\in\pi(\calA_\Lambda)'=(\pi(\calA_\Lambda)'')'$. Thus $\Ad(\tilde{U}^{(\pi)}_{(1,g)})(A)=A$ for all $A\in\pi(\calA_\Lambda)''$ and $g \in G$.

 So, for $(g_1,g_2) \in G\times G$ and $A \in \pi(\calA_\Lambda)''$,
    \begin{align*}
        \Ad(\tilde{U}^{(\pi)}_{(g_1,g_2)})(A)
        &=\Ad(\tilde{U}^{(\pi)}_{(g_1,g_1)}  \tilde{U}^{(\pi)}_{(1,g_1^{-1}g_2)})(A)\\
        &=\Ad(\tilde{U}^{(\pi)}_{(g_1,g_1)})(A)
        =\Ad(U^{(\pi)}_{g_1})(A)\,.
    \end{align*}

    For the grading projections, we compute, for $A \in \pi(\calA_\Lambda)''$ and $\phi \in \widehat{G}$:
    \begin{align*}
        P_{(\phi, \hat{1})}(A) 
        &= \int_{G \times G}\!\! \overline{(\phi, \hat{1})(g_1, g_2)} \Ad(\tilde{U}^{(\pi)}_{(g_1, g_2)})(A) \, d\mu_{G \times G}(g_1, g_2) \\
        &= \int_{G \times G}\!\! \phi(g_1^{-1}) \Ad(U^{(\pi)}_{g_1})(A) \, d\mu_{G \times G}(g_1, g_2) \\
        &= \int_{g_1 \in G}\!\! \phi(g_1^{-1}) \Ad(U^{(\pi)}_{g_1})(A) \, d\mu_G(g_1) = P_\phi(A),
    \end{align*}
    since the integrand is independent of $g_2$.
    
    The argument for $P_{(\hat{1}, \phi)}(B) = P_\phi(B)$ is the same, with the roles of $\Lambda$ and $\Lambda^c$ reversed.
\end{proof}

\begin{Lem}\label{gradeProjectionsPreserveSupport}
Let $(\pi,U^{(\pi)})$ be a $G$-covariant representation of $(\mathcal{A},\alpha)$.
Then, for every region $\Lambda_1 \subseteq \Gamma$ and every $X \in \pi(\calA_{\Lambda_1})''$, one has
\[
P_\phi(X)\in\pi(\mathcal{A}_{\Lambda_1})''\qquad(\phi\in\widehat G).
\]

\end{Lem}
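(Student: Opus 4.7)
The plan is to reduce the claim to two essentially independent facts: (i) for each fixed $g\in G$, the automorphism $\Ad(U^{(\pi)}_g)$ preserves $\pi(\calA_{\Lambda_1})''$, and (ii) the Bochner integral defining $P_\phi$ stays inside any norm-closed subspace that contains its integrand pointwise.

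First I would verify that $\alpha_g(\calA_{\Lambda_1})=\calA_{\Lambda_1}$. This is where on-site symmetry enters: for any finite $X \subseteq \Lambda_1$, $\alpha_g$ restricts to $\Ad(U_{X,g}) \in \Aut(\calA_X)$, so it preserves $\calA_{\Lambda_1}^{\rm loc}$; since $\alpha_g$ is a $*$-automorphism (hence an isometry), it also preserves the norm closure $\calA_{\Lambda_1}$. Combined with the $G$-covariance relation $\pi\circ\alpha_g = \Ad(U^{(\pi)}_g)\circ\pi$, this gives $\Ad(U^{(\pi)}_g)(\pi(\calA_{\Lambda_1})) = \pi(\calA_{\Lambda_1})$.

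Next I would upgrade this to the double commutant. Since $U^{(\pi)}_g$ is unitary, $\Ad(U^{(\pi)}_g)$ is a spatial $*$-automorphism of $\calB(\calH_\pi)$ and in particular is continuous in the strong operator topology. Using the standard characterization $\pi(\calA_{\Lambda_1})'' = \overline{\pi(\calA_{\Lambda_1})}^{\mathrm{SOT}}$, any SOT-limit of a net in $\pi(\calA_{\Lambda_1})$ is sent by $\Ad(U^{(\pi)}_g)$ to another SOT-limit of a net in $\pi(\calA_{\Lambda_1})$, so $\Ad(U^{(\pi)}_g)(\pi(\calA_{\Lambda_1})'') \subseteq \pi(\calA_{\Lambda_1})''$, and applying the same argument with $g^{-1}$ gives equality. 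Hence for every $X \in \pi(\calA_{\Lambda_1})''$ the integrand $g \mapsto \phi(g^{-1})\,\Ad(U^{(\pi)}_g)(X)$ takes values in $\pi(\calA_{\Lambda_1})''$.

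Finally I would invoke the general fact that a Bochner integral of a strongly measurable, norm-integrable function into a norm-closed subspace lies in that subspace: one approximates the integrand in $L^1$-norm by simple functions valued in the subspace, whose integrals are finite linear combinations from the subspace, and passes to the norm limit. Applying this with the subspace $\pi(\calA_{\Lambda_1})''\subseteq \calB(\calH_\pi)$ yields $P_\phi(X) \in \pi(\calA_{\Lambda_1})''$, as required.

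I do not expect a genuine obstacle here; the only mildly subtle point is confirming that $\Ad(U^{(\pi)}_g)$ preserves the double commutant. One could instead argue by commutants directly: write $X \in \pi(\calA_{\Lambda_1})''$ and $Y \in \pi(\calA_{\Lambda_1})'$, pull $Y$ inside the Bochner integral using that right-multiplication by $Y$ is bounded (hence continuous), and use the already-established preservation of $\pi(\calA_{\Lambda_1})$ under $\Ad(U^{(\pi)}_g)$ to see that each $\Ad(U^{(\pi)}_g)(X)$ commutes with $Y$; this is a convenient variant if one wishes to avoid invoking SOT-continuity of $\Ad(U^{(\pi)}_g)$ explicitly.
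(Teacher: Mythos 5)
Your proposal is correct and follows essentially the same route as the paper: show that $\Ad(U^{(\pi)}_g)$ preserves $\pi(\calA_{\Lambda_1})''$ (the paper gets this from $(\Ad(U)(S))''=\Ad(U)(S'')$, you via SOT-continuity and the density theorem, an immaterial difference), then conclude by closedness of the von Neumann algebra under Bochner integration. No gaps.
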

\begin{proof}
Fix $\Lambda_1$ and $X\in\pi(\mathcal{A}_{\Lambda_1})''$.
Since $\alpha$ is on-site and $U^{(\pi)}$ implements it, we have
\(\Ad(U^{(\pi)}_g)(\pi(\mathcal{A}_{\Lambda_1}))=\pi(\mathcal{A}_{\Lambda_1})\) for all \(g\in G\).
So, for all $g \in G$, $\Ad(U^{(\pi)}_g)(X) \in \Ad(U^{(\pi)}_g)(\pi(\calA_{\Lambda_1})'')=(\Ad(U^{(\pi)}_g)(\pi(\calA_{\Lambda_1})))''=\pi(\mathcal{A}_{\Lambda_1})''$.
By definition
\[
P_\phi(X)=\int_G \phi(g^{-1})\,\Ad(U^{(\pi)}_g)(X)\,d\mu_G(g),
\]
which is a Bochner integral of a bounded \(\pi(\mathcal{A}_{\Lambda_1})''\)-valued function.
Since the von Neumann algebra \(\pi(\mathcal{A}_{\Lambda_1})''\) is closed under such integrals, the integral (hence \(P_\phi(X)\)) lies in \(\pi(\mathcal{A}_{\Lambda_1})''\).

\end{proof}

\begin{Rem}
    As this applies to any such compact $G$ action, in particular it applies in the case of $(\pi,\tilde{U}^{(\pi)})$ a $G\times G$-covariant representation of $(\calA,\tilde{\alpha})$, and using $P_{(\phi_1,\phi_2)}$ for $(\phi_1,\phi_2) \in \widehat{G}\times \widehat{G}$.
\end{Rem}

\begin{Lem}\label{prodOfDisjSuppAndIdGradeImplHomogFactors}
    Let $(\pi:\calA\to\calB(\calH_\pi),U^{(\pi)}:G \to \calU(\calH_\pi))$ be a $G$-covariant representation of $(\calA,\alpha)$ and $(\pi:\calA\to\calB(\calH_\pi),\tilde{U}^{(\pi)} : G\times G\to\calU(\calH_\pi))$ be a $G\times G$-covariant representation of $(\calA,\tilde{\alpha})$. Suppose that $\forall g \in G, \tilde{U}^{(\pi)}_{(g,g)}=U^{(\pi)}_{g}$.

    Suppose also that $\pi$ has the property that if $X \in \pi(\calA_\Lambda)''$ and $Y \in \pi(\calA_{\Lambda^c})''$ and $X,Y$ are non-zero, then $XY \ne 0$, where $\Lambda$ is the region in terms of which the $(G \times G)$-action is defined. 
    
    Let $A \in \pi(\calA_\Lambda)''$ and $B \in \pi(\calA_{\Lambda^c})''$.

    Then, if $AB$ is non-zero and is homogeneous of grade $\hat{1} \in \widehat{G}$, then $A$ and $B$ are each homogeneous with respect to the $\widehat{G}$-grading, and their grades are inverses of each-other.
\end{Lem}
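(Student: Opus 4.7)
The plan is to use the refined $(\widehat G\times\widehat G)$-grading together with the $G$-homogeneity assumption on $AB$ and the non-degeneracy condition on $\pi$. The key identity I would first establish is
\[
P_{(\phi,\psi)}(AB) \;=\; P_\phi(A)\,P_\psi(B) \qquad (\phi,\psi\in\widehat G).
\]
To see this, I would expand $P_{(\phi,\psi)}(AB)$ as a Bochner integral over $G\times G$, use Lemma~\ref{GHatByGHatOnDoubleCommutant} to replace $\Ad(\tilde U^{(\pi)}_{(g_1,g_2)})(A)$ by $\Ad(U^{(\pi)}_{g_1})(A)$ and similarly for $B$, and then apply Fubini to factor the integral, pulling the (constant in $g_1$) factor $P_\psi(B)$ out through continuity of right multiplication. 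This is the step that justifies all the rest, and is essentially a routine check once Lemma~\ref{GHatByGHatOnDoubleCommutant} is in hand.

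Next I would bring in the hypothesis that $AB$ is $G$-homogeneous of grade $\hat 1$, meaning $P_\eta(AB)=0$ for all $\eta\ne\hat 1$. By Lemma~\ref{nonzeroGHatByGHatCompImpliesNonzeroGHatComp} (in contrapositive form), $P_\eta(AB)=0$ forces $P_{(\phi,\psi)}(AB)=0$ whenever $\phi\psi=\eta\ne\hat 1$. Combined with the identity above, this gives
\[
P_\phi(A)\,P_\psi(B)=0 \quad \text{whenever } \psi\neq\phi^{-1}.
\]
By Lemma~\ref{gradeProjectionsPreserveSupport}, $P_\phi(A)\in\pi(\calA_\Lambda)''$ and $P_\psi(B)\in\pi(\calA_{\Lambda^c})''$, so the non-degeneracy hypothesis on $\pi$ upgrades the above vanishing to the pointwise statement: for each pair $(\phi,\psi)$ with $\psi\ne\phi^{-1}$, at least one of $P_\phi(A)$ or $P_\psi(B)$ is zero.

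Now I would exploit $AB\ne 0$. Since $AB$ itself has some nonzero $(\widehat G\times\widehat G)$-component, say $P_{(\phi_0,\phi_0^{-1})}(AB)\ne 0$, the identity above forces both $P_{\phi_0}(A)\ne 0$ and $P_{\phi_0^{-1}}(B)\ne 0$. Using the previous step with this fixed $\phi_0$: for every $\psi\ne\phi_0^{-1}$, since $P_{\phi_0}(A)\ne 0$, one must have $P_\psi(B)=0$; so $\phi_0^{-1}$ is the unique $\widehat G$-grade appearing in $B$. A symmetric argument with $\psi_0=\phi_0^{-1}$ shows that $\phi_0$ is the unique $\widehat G$-grade appearing in $A$. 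By Lemma~\ref{OpHasOnlyOneGradedComponentImpliesHomogeneous}, $A=P_{\phi_0}(A)$ is homogeneous of grade $\phi_0$ and $B=P_{\phi_0^{-1}}(B)$ is homogeneous of grade $\phi_0^{-1}$, the inverse of the grade of $A$.

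The main obstacle I anticipate is verifying the factorization identity $P_{(\phi,\psi)}(AB)=P_\phi(A)P_\psi(B)$ cleanly: one must interchange the two integrals over $G$ and move the constant factors $\Ad(U^{(\pi)}_{g_1})(A)$ and $P_\psi(B)$ across each other, which requires care but no new ideas. Everything else is a pigeonhole/bookkeeping argument built from lemmas already proved.
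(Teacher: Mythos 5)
Your proposal is correct and follows essentially the same route as the paper's proof: both hinge on the factorization identity $P_{(\phi,\psi)}(AB)=P_\phi(A)P_\psi(B)$ obtained via Lemma~\ref{GHatByGHatOnDoubleCommutant} and Fubini, then combine Lemma~\ref{gradeProjectionsPreserveSupport}, the non-degeneracy hypothesis, Lemma~\ref{nonzeroGHatByGHatCompImpliesNonzeroGHatComp}, and Lemma~\ref{OpHasOnlyOneGradedComponentImpliesHomogeneous} to pin the grades down. The only cosmetic difference is the direction of the bookkeeping (you start from a nonzero $(\widehat G\times\widehat G)$-component of $AB$, the paper starts from the nonemptiness of the grade sets of $A$ and $B$), which changes nothing of substance.
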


\begin{proof}
Since $AB \neq 0$, we have $A \neq 0$ and $B \neq 0$. By Lemma~\ref{homogVecNotKernelAndNonzeroGradedPartsOfOp} the sets
\[
S_A:=\{\phi\in\widehat G:\;P_\phi(A)\neq0\},\qquad
S_B:=\{\phi\in\widehat G:\;P_\phi(B)\neq0\}
\]
are nonempty.

By Lemma~\ref{GHatByGHatOnDoubleCommutant}, we have that, for all $(g_1,g_2) \in G\times G$,
    \begin{align*}
        \Ad(\tilde{U}^{(\pi)}_{(g_1,g_2)})(AB) &= \Ad(\tilde{U}^{(\pi)}_{(g_1,g_2)})(A) \, \Ad(\tilde{U}^{(\pi)}_{(g_1,g_2)})(B)\\
        &= \Ad(U^{(\pi)}_{g_1})(A) \, \Ad(U^{(\pi)}_{g_2})(B).
    \end{align*}
So, for all $\phi_A, \phi_B \in \widehat{G}$,
    \begin{align*}
        P_{(\phi_A,\phi_B)}(AB)
        &= \int_{(g_1,g_2)\in G\times G}\hspace{-3.8em}\overline{(\phi_A,\phi_B)(g_1,g_2)} \Ad(\tilde{U}^{(\pi)}_{(g_1,g_2)})(AB) d\mu_{G \times G}(g_1,g_2)
        \end{align*}\vspace{-1.7em}\begin{align*}
        \hphantom{P_{(\phi_A,\phi_B)}(AB)}\hspace{-29pt}&=\int_{(g_1,g_2)\in G\times G}\hspace{-4em}\phi_A(g_1^{-1}) \phi_B(g_2^{-1})\Ad(U^{(\pi)}_{g_1})(A) \, \Ad(U^{(\pi)}_{g_2})(B)d\mu_{G \times G}(g_1,g_2)\\
        &=\int_{(g_1,g_2)\in G\times G}\hspace{-3.6em}\phi_A(g_1^{-1}) \Ad(U^{(\pi)}_{g_1})(A) \,\phi_B(g_2^{-1}) \Ad(U^{(\pi)}_{g_2})(B)d\mu_{G \times G}(g_1,g_2)\\
        &=\int_{g_1 \in G}\int_{g_2 \in G}\hspace{-1.3em}\phi_A(g_1^{-1}) \Ad(U^{(\pi)}_{g_1})(A) \,\phi_B(g_2^{-1}) \Ad(U^{(\pi)}_{g_2})(B)d\mu_G(g_2)d\mu_G(g_1)\\
        &=\int_{g_1 \in G}\hspace{-1.3em}\phi_A(g_1^{-1}) \Ad(U^{(\pi)}_{g_1})(A) \int_{g_2 \in G} \hspace{-1.3em}\phi_B(g_2^{-1}) \Ad(U^{(\pi)}_{g_2})(B)d\mu_G(g_2)d\mu_G(g_1)\\
        &=\int_{g_1 \in G}\hspace{-1.3em}\phi_A(g_1^{-1}) \Ad(U^{(\pi)}_{g_1})(A) d\mu_G(g_1)\,\int_{g_2 \in G}\hspace{-1.3em} \phi_B(g_2^{-1}) \Ad(U^{(\pi)}_{g_2})(B)d\mu_G(g_2)\\
        &= P_{\phi_A}(A) P_{\phi_B}(B).
    \end{align*}

By Lemma~\ref{gradeProjectionsPreserveSupport}, $P_{\phi_A}(A)\in\pi(\mathcal A_\Lambda)''$ and $P_{\phi_B}(B)\in\pi(\mathcal A_{\Lambda^c})''$.

Let $\phi_A \in S_A$ and $\phi_B \in S_B$, so that $P_{\phi_A}(A), P_{\phi_B}(B)$ are both nonzero. Then, since both are nonzero and $\pi$ was assumed to have the property that for nonzero operators $X \in \pi(\calA_\Lambda)'', Y \in \pi(\calA_{\Lambda^c})''$,  $XY\ne 0$, we get
\[
P_{(\phi_A,\phi_B)}(AB)=P_{\phi_A}(A)P_{\phi_B}(B)\neq 0.
\]

By Lemma~\ref{nonzeroGHatByGHatCompImpliesNonzeroGHatComp} then, $P_{\phi_A \phi_B}(AB) \ne 0$. So, $$S_{AB} = \{\phi \in \widehat{G} : \; P_\phi(AB) \ne 0\} \supseteq \{\phi_A \phi_B : \; \phi_A \in S_A, \phi_B \in S_B\}.$$

But $AB$ is assumed homogeneous of grade $\hat{1}\in \widehat{G}$, so the only $\phi_A \phi_B$ that can occur is $\hat{1}$.
Therefore, for every $\phi_A \in S_A$ (of which there is at least one), every $\phi_B \in S_B$ must be $\phi_A^{-1}$, so all elements of $S_B$ are equal to each-other, i.e. $S_B$ has only one element, and conversely, all elements $\phi_A \in S_A$ must be the inverse of this element of $S_B$.
Thus, $S_A=\{\phi_A\}$ and $S_B=\{\phi_B\}$ for some $\phi_A, \phi_B \in \widehat{G}$ with $\phi_A \phi_B = \hat{1}$.
Then by Lemma~\ref{OpHasOnlyOneGradedComponentImpliesHomogeneous} we conclude $A=P_{\phi_A}(A)$ and $B=P_{\phi_B}(B)$; i.e. \(A\) and \(B\) are homogeneous of grades \(\phi_A\) and \(\phi_B=\phi_A^{-1}\), as claimed.

\end{proof}

\section{Classification of Superselection sectors with respect to a Product Representation}\label{sec:mainResult}
The following theorem is a specialization of Theorem~4.5 of \cite{NaaijkensOgataApproxSplit2022}. They prove a more general version stated for quasi-equivalence; here we consider the special case with irreducible representations and therefore use unitary equivalence, which suffices for our purposes.
\begin{theorem}[Theorem 4.5 from \cite{NaaijkensOgataApproxSplit2022}]\label{productRepTrivialSectorTheory}
    Let $\Lambda \subset \Gamma$ be a cone. Let $\pi_\Lambda : \calA_{\Lambda}\to \calB(\calH_\Lambda)$ and $\pi_{\Lambda^c} : \calA_{\Lambda^c}\to \calB(\calH_{\Lambda^c})$ be irreducible representations of $\calA_\Lambda$ and $\calA_{\Lambda^c}$ respectively. Let $\pi_0 := \pi_\Lambda \otimes \pi_{\Lambda^c}$.

    Then, if any irreducible representation $\sigma : \calA \to \calB(\calH_\sigma)$ satisfies the superselection criterion (Definition~\ref{SuperSelectionCriterion}) with respect to $\pi_0$, then $\sigma$ is unitarily equivalent to $\pi_0$.
\end{theorem}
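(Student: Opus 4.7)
The plan is to exploit the product structure of $\pi_0 = \pi_\Lambda \otimes \pi_{\Lambda^c}$ together with two applications of the superselection criterion---one for a cone contained in $\Lambda$ and one for a cone contained in $\Lambda^c$---to conjugate $\sigma$ onto $\pi_0$ in two stages.

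First, pick a cone $\Lambda_1 \subseteq \Lambda$. The superselection criterion (Definition~\ref{SuperSelectionCriterion}) produces a unitary $V_1 \colon \mathcal{H}_\sigma \to \mathcal{H}_{\pi_0}$ with $\mathrm{Ad}(V_1) \circ \sigma$ agreeing with $\pi_0$ on $\mathcal{A}_{\Lambda_1^c}$, and hence on the smaller algebra $\mathcal{A}_{\Lambda^c}$. Setting $\tilde\sigma := \mathrm{Ad}(V_1) \circ \sigma$, one has $\tilde\sigma|_{\mathcal{A}_{\Lambda^c}} = \mathbf{1}_{\mathcal{H}_\Lambda} \otimes \pi_{\Lambda^c}(\cdot)$. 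Because elements of $\mathcal{A}_\Lambda$ and $\mathcal{A}_{\Lambda^c}$ commute inside $\mathcal{A}$, $\tilde\sigma(\mathcal{A}_\Lambda)$ commutes with $\tilde\sigma(\mathcal{A}_{\Lambda^c})$ and thus with its double commutant; by irreducibility of $\pi_{\Lambda^c}$ this double commutant is $\mathbf{1} \otimes \mathcal{B}(\mathcal{H}_{\Lambda^c})$, whose own commutant is $\mathcal{B}(\mathcal{H}_\Lambda) \otimes \mathbf{1}$. So $\tilde\sigma|_{\mathcal{A}_\Lambda}$ takes the factorized form $\rho \otimes \mathbf{1}$ for some representation $\rho \colon \mathcal{A}_\Lambda \to \mathcal{B}(\mathcal{H}_\Lambda)$. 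A short von Neumann algebra computation, comparing $\tilde\sigma(\mathcal{A})'' = \rho(\mathcal{A}_\Lambda)'' \otimes \mathcal{B}(\mathcal{H}_{\Lambda^c})$ with $\mathcal{B}(\mathcal{H}_\Lambda \otimes \mathcal{H}_{\Lambda^c})$ (which follows from irreducibility of $\sigma$), forces $\rho(\mathcal{A}_\Lambda)'' = \mathcal{B}(\mathcal{H}_\Lambda)$, so $\rho$ is irreducible. Putting the pieces together, $\tilde\sigma = \rho \otimes \pi_{\Lambda^c}$.

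Next, pick a cone $\Lambda_2 \subseteq \Lambda^c$ and apply the superselection criterion again to get a unitary $V_2$ with $\mathrm{Ad}(V_2) \circ \sigma = \pi_0$ on $\mathcal{A}_{\Lambda_2^c} \supseteq \mathcal{A}_\Lambda$. Set $W := V_2 V_1^* \in \mathcal{U}(\mathcal{H}_{\pi_0})$. For $A \in \mathcal{A}_\Lambda$,
\[
W \tilde\sigma(A) W^* = V_2 \sigma(A) V_2^* = \pi_0(A),
\]
so $W(\rho(A) \otimes \mathbf{1}) W^* = \pi_\Lambda(A) \otimes \mathbf{1}$. In particular $W$ normalizes the type~I factor $\mathcal{B}(\mathcal{H}_\Lambda) \otimes \mathbf{1}$; since every automorphism of such a factor is inner, there is a unitary $W_1 \in \mathcal{U}(\mathcal{H}_\Lambda)$ with $W_1 \rho(A) W_1^* = \pi_\Lambda(A)$ for all $A \in \mathcal{A}_\Lambda$. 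Hence $\rho \cong \pi_\Lambda$, and therefore $\sigma \cong \tilde\sigma = \rho \otimes \pi_{\Lambda^c} \cong \pi_\Lambda \otimes \pi_{\Lambda^c} = \pi_0$.

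The main obstacle is the tensor-factorization step in the second paragraph: inferring the tensor position of $\tilde\sigma(\mathcal{A}_\Lambda)$ from agreement on only the single cone-complement algebra $\mathcal{A}_{\Lambda^c}$. This relies crucially both on the irreducibility of $\pi_{\Lambda^c}$ (which turns the relative commutant of $\pi_0(\mathcal{A}_{\Lambda^c})$ into a clean type~I factor) and on the commutation of operators supported on disjoint regions. Once that factorization is in hand, the rest is an essentially routine appeal to the inner-automorphism property of type~I factors; any generalization beyond irreducible representations (as in the quasi-equivalence version of Naaijkens--Ogata) would require replacing this step with a genuine split-property argument.
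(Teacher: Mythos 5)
The paper does not prove this statement at all: it is imported verbatim (as a specialization to irreducible representations) from Theorem~4.5 of Naaijkens--Ogata, so there is no in-paper proof to compare against. Judged on its own, your argument is correct and gives a clean, self-contained proof of exactly the special case the paper needs. Every step checks out: the first application of the criterion (you may as well take $\Lambda_1=\Lambda$, since $\Lambda$ is itself a cone) puts $\tilde\sigma|_{\calA_{\Lambda^c}}=1\otimes\pi_{\Lambda^c}$; irreducibility of $\pi_{\Lambda^c}$ gives $(1\otimes\pi_{\Lambda^c}(\calA_{\Lambda^c}))'=\calB(\calH_\Lambda)\otimes 1$, which forces the factorization $\tilde\sigma|_{\calA_\Lambda}=\rho\otimes 1$; irreducibility of $\sigma$ forces $\rho$ irreducible; and the second application of the criterion, via $W=V_2V_1^*$ normalizing $\calB(\calH_\Lambda)\otimes 1$ and innerness of automorphisms of a type~I factor, pins down $\rho\cong\pi_\Lambda$. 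You are right that the second application is genuinely needed ($\calA_\Lambda$ has many inequivalent irreducibles, so $\rho\cong\pi_\Lambda$ does not come for free). The only point worth flagging explicitly is the existence of a cone $\Lambda_2\subseteq\Lambda^c$: this is the standard geometric assumption in this literature (the complement of a proper cone in $\br^2$ contains a cone), but the paper never defines ``cone,'' so you are silently relying on it. Your route differs from the cited source in a useful way: Naaijkens--Ogata prove a stronger statement (quasi-equivalence for not-necessarily-irreducible $\sigma$) using split-property machinery, whereas your argument trades that generality for elementary von Neumann algebra facts that all hinge on irreducibility --- as you yourself note at the end, the factorization and inner-automorphism steps would have to be replaced wholesale in the general case.
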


\begin{Rem}\label{RemarkGByGCovariantRepFromProductOfGCovReps}
Given $G$-covariant representations $(\pi_\Lambda, U^{(\pi_\Lambda)})$ and $(\pi_{\Lambda^c}, U^{(\pi_{\Lambda^c})})$ of $(\calA_{\Lambda},\alpha_\Lambda)$ and $(\calA_{\Lambda^c},\alpha_{\Lambda^c})$ respectively, the tensor product $(\pi_0 := \pi_\Lambda \otimes \pi_{\Lambda^c}, U^{(\pi_0)} := (g \mapsto U^{(\pi_\Lambda)}_g \otimes U^{(\pi_{\Lambda^c})}_g)\,)$ is a $G$-covariant representation of $(\calA,\alpha)$.
Moreover, defining $\tilde{U}^{(\pi_0)}_{(g_1, g_2)} := U^{(\pi_\Lambda)}_{g_1} \otimes U^{(\pi_{\Lambda^c})}_{g_2}$, $(\pi_0, \tilde{U}^{(\pi_0)})$ is a $G \times G$-covariant representation of $(\calA,\tilde{\alpha})$ satisfying $\tilde{U}^{(\pi_0)}_{(g,g)} = U^{(\pi_0)}_g$ for all $g$.

\end{Rem}

\begin{theorem}\label{classificationOfSectorsForGEquivariantSSCWrtProductRep}
Let $\Lambda \subset \Gamma$ be a cone.
    Let $(\pi_\Lambda, U^{(\pi_\Lambda)}_\xbullet)$ and $(\pi_{\Lambda^c}, U^{(\pi_{\Lambda^c})}_\xbullet)$ be irreducible $G$-covariant representations of $(\calA_{\Lambda},\alpha_{\Lambda})$ and $(\calA_{\Lambda^c},\alpha_{\Lambda^c})$ respectively.
    Let $(\pi_0, U^{(\pi_0)}_\xbullet)$ be the $G$-covariant representation of $(\calA,\alpha)$ obtained as $\pi_0 := \pi_\Lambda \otimes \pi_{\Lambda^c}$ and $U^{(\pi_0)}_g := U^{(\pi_\Lambda)}_g \otimes U^{(\pi_{\Lambda^c})}_g$.

    Let $(\sigma : \calA \to \calB(\calH_\sigma), U^{(\sigma)}_\xbullet)$ be an irreducible $G$-covariant representation of $\calA$ which satisfies the $G$-equivariant superselection criterion (Definition~\ref{GEquiSSC}) with respect to $(\pi_0, U^{(\pi_0)}_\xbullet)$.

    Then, there exists a unique $\phi \in \widehat{G}$ such that there is a unitary $U : \calH_\sigma \to \calH_{\pi_0}=\calH_{\Lambda}\otimes \calH_{\Lambda^c}$ of grade $\phi$ (in the sense defined in Definition~\ref{DefOfGradeForMapsBetweenSpaces}) such that $\Ad(U)\circ \sigma =\pi_0$. (There are also no non-homogeneous $U : \calH_\sigma \to \calH_{\pi_0}$ that satisfy $\Ad(U)\circ \sigma =\pi_0$.)

    In this sense, the irreducible $G$-covariant representations of $(\calA,\alpha)$ satisfying the $G$-equivariant version of the superselection criterion with respect to $(\pi_0, U^{(\pi_0)})$ are classified by $\widehat{G}$ up to $G$-equivariant unitary equivalence.
\end{theorem}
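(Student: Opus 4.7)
The plan is to reduce the theorem to Theorem~\ref{productRepTrivialSectorTheory} of Naaijkens--Ogata together with a Schur-type argument that extracts a character from the $G$-covariant structure on $\sigma$ and $\pi_0$. First I would observe that the $G$-equivariant superselection criterion of Definition~\ref{GEquiSSC} is strictly stronger than the ordinary one of Definition~\ref{SuperSelectionCriterion}: dropping the $G$-equivariance requirement on $V_{\sigma,\Lambda}$ yields the ordinary criterion. Since $\pi_\Lambda$ and $\pi_{\Lambda^c}$ are irreducible, so is their tensor product $\pi_0$, and Theorem~\ref{productRepTrivialSectorTheory} applies, producing a (not necessarily $G$-equivariant) unitary $U: \calH_\sigma \to \calH_{\pi_0}$ with $\Ad(U)\circ\sigma = \pi_0$. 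Mere existence of such a $U$ is all that this step needs to supply.

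Next, since $\sigma$ (and hence $\pi_0$) is irreducible, Schur's lemma implies that the space of intertwiners $\calH_\sigma \to \calH_{\pi_0}$ is one-dimensional, so any two unitary intertwiners differ by a scalar of modulus one. For each $g \in G$, the unitary $U^{(\pi_0)}_g\, U\, (U^{(\sigma)}_g)^*$ is again an intertwiner (a direct computation using the two $G$-covariance relations $\pi_0 \circ \alpha_g = \Ad(U^{(\pi_0)}_g)\circ\pi_0$ and $\sigma \circ \alpha_g = \Ad(U^{(\sigma)}_g)\circ\sigma$), so there exists a scalar $\phi(g) \in U(1)$ with
\[
U^{(\pi_0)}_g \, U \, (U^{(\sigma)}_g)^* \;=\; \phi(g)\, U.
\]
Verifying that $\phi: G \to U(1)$ is a continuous group homomorphism, hence an element of $\widehat{G}$, is routine: multiplicativity comes from comparing the defining relation at $gh$ with its iterate in $g$ and $h$, and continuity from the strong continuity of $U^{(\pi_0)}_\xbullet$ and $U^{(\sigma)}_\xbullet$. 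By Definition~\ref{DefOfGradeForMapsBetweenSpaces}, $U$ is then homogeneous of grade $\phi$.

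For uniqueness of $\phi$ and the assertion that no non-homogeneous intertwiner exists: every unitary intertwiner equals $\lambda U$ for some $\lambda \in U(1)$ by Schur, and a one-line calculation shows that $\lambda U$ also has grade $\phi$, so every intertwiner between $\sigma$ and $\pi_0$ is homogeneous of the same grade $\phi$. I do not anticipate a serious obstacle in this argument; the nontrivial content is already packaged in Theorem~\ref{productRepTrivialSectorTheory}, and the role of the $\widehat{G}$-grading machinery of Section~\ref{sec:GHatGrading} here is essentially to recognize the scalar cocycle arising from Schur's lemma as a continuous character, i.e.\ an element of $\widehat{G}$. Notably, this route does not invoke the $G$-equivariant unitaries $V_{\sigma,\Lambda}$ from the symmetric criterion directly --- only the weaker ordinary criterion is used, together with the fact that $\sigma$ carries a $G$-covariant structure at all.
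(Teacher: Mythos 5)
Your proposal is correct, but it takes a genuinely different and substantially shorter route than the paper. The paper never invokes Schur's lemma on the intertwiner $U$ directly; instead it uses the two $G$-equivariant unitaries $V_{\sigma,\Lambda}, V_{\sigma,\Lambda^c}$ supplied by Definition~\ref{GEquiSSC} to write $UV_{\sigma,\Lambda}^{*}=V_\Lambda\otimes 1$ and $UV_{\sigma,\Lambda^c}=1\otimes V_{\Lambda^c}$, observes that $(V_\Lambda\otimes 1)^{*}(1\otimes V_{\Lambda^c})=V_{\sigma,\Lambda}V_{\sigma,\Lambda^c}^{*}$ has grade $\hat 1$, and then applies the $\widehat G\times\widehat G$-refinement machinery (Lemma~\ref{prodOfDisjSuppAndIdGradeImplHomogFactors}) to conclude that each tensor factor is homogeneous, whence $U$ is. Your argument --- that $U^{(\pi_0)}_g U (U^{(\sigma)}_g)^{*}$ is again a unitary intertwiner between the irreducible representations $\sigma$ and $\pi_0$, hence a phase $\phi(g)$ times $U$, with $\phi$ a continuous character --- is sound as stated (the intertwining computation, multiplicativity, and continuity all check out), and it proves slightly more: the label $\phi$ exists for any irreducible $G$-covariant $\sigma$ satisfying only the ordinary criterion of Definition~\ref{SuperSelectionCriterion}. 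What the paper's longer route buys is twofold: it does not rely on one-dimensionality of the intertwiner space, so it is the version that would survive a generalization to reducible or quasi-equivalent settings where Schur's lemma is unavailable; and it localizes the charge, exhibiting $\phi$ as the grade of the cone-supported unitary $V_\Lambda\otimes 1_{\calH_{\Lambda^c}}$, which is the physically meaningful statement that the sector label is carried inside $\Lambda$. You lose that localization, but for the theorem as literally stated your proof is complete and more elementary.
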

\begin{proof}
    As $(\sigma,U^{(\sigma)}_\xbullet)$ satisfies the $G$-symmetry respecting version of the superselection criterion with respect to $(\pi_0,U^{(\pi_0)}_\xbullet)$, $\sigma$ satisfies the superselection criterion (the version not dealing with a $G$-action, Definition ~\ref{SuperSelectionCriterion}) with respect to $\pi_0$.
    
    Therefore, by Theorem \ref{productRepTrivialSectorTheory}, there exists a unitary $U : \calH_\sigma \to \calH_{\pi_0}$ such that $\Ad(U) \circ \sigma = \pi_0$.

    Because $(\sigma,U^{(\sigma)}_\xbullet)$ satisfies the the $G$-equivariant version of the superselection criterion (Definition~\ref{GEquiSSC}) with respect to $(\pi_0,U^{(\pi_0)}_\xbullet)$, there exist $G$-equivariant unitaries $V_{\sigma,\Lambda}, V_{\sigma,\Lambda^c} : \calH_\sigma \to \calH_{\pi_0}$ such that $\Ad(V_{\sigma,\Lambda})\circ \sigma |_{\calA_{\Lambda^c}}=\pi_0|_{\calA_{\Lambda^c}}$ and $\Ad(V_{\sigma,\Lambda^c})\circ \sigma|_{\calA_\Lambda}=\pi_0|_{\calA_\Lambda}$.
    So $\Ad(V_{\sigma,\Lambda}^*)\circ \pi_0|_{\calA_{\Lambda^c}}=\sigma|_{\calA_{\Lambda^c}}$ and $\Ad(V_{\sigma,\Lambda^c}^*)\circ \pi_0|_{\calA_{\Lambda}}=\sigma|_{\calA_{\Lambda}}$, and therefore $\Ad(U V_{\sigma,\Lambda}^*)\circ \pi_0|_{\calA_{\Lambda^c}}=\Ad(U) \circ \sigma|_{\calA_{\Lambda^c}}=\pi_0|_{\calA_{\Lambda^c}}$ and $\Ad(U V_{\sigma,\Lambda^c}^*)\circ \pi_0|_{\calA_{\Lambda}}=\pi_0|_{\calA_{\Lambda}}$. Now, using $\pi_0 = \pi_\Lambda \otimes \pi_{\Lambda^c}$ so $\pi_0 |_{\calA_{\Lambda^c}} = 1_{\calH_\Lambda} \otimes \pi_{\Lambda^c}$ and $\pi_0 |_{\calA_{\Lambda}} = \pi_\Lambda \otimes 1_{\calH_{\Lambda^c}}$ we have $\Ad(U V_{\sigma,\Lambda}^*)\circ (1_{\calH_\Lambda} \otimes \pi_{\Lambda^c})=(1_{\calH_\Lambda} \otimes \pi_{\Lambda^c})$ and $\Ad(U V_{\sigma,\Lambda^c}^*)\circ (\pi_\Lambda \otimes 1_{\calH_{\Lambda^c}})=(\pi_\Lambda \otimes 1_{\calH_{\Lambda^c}})$.
    Therefore, $U V_{\sigma,\Lambda}^* \in (1_{\calH_\Lambda} \otimes \pi_{\Lambda^c}(\calA_{\Lambda^c}))'=(\calB(\calH_\Lambda)\otimes 1_{\calH_{\Lambda^c}})$.
    Let $V_\Lambda \in \calU(\calH_\Lambda)$ be the unitary such that $U V_{\sigma,\Lambda}^* = V_\Lambda \otimes 1_{\calH_{\Lambda^c}}$ and let $V_{\Lambda^c}\in \calU(\calH_{\Lambda^c})$ be the unitary such that $U V_{\sigma,\Lambda^c}=1_{\calH_\Lambda}\otimes V_{\Lambda^c}$.

    Then, $(V_\Lambda^* \otimes V_{\Lambda^c})=(V_\Lambda \otimes 1_{\calH_{\Lambda^c}})^* \cdot (1_{\calH_\Lambda}\otimes V_{\Lambda^c})=(U V_{\sigma,\Lambda}^*)^* (U V_{\sigma,\Lambda^c}) = V_{\sigma,\Lambda} V_{\sigma,\Lambda^c}^*$.

    As $V_{\sigma,\Lambda} V_{\sigma,\Lambda^c}^* : \calH_{\pi_0}\to \calH_{\pi_0}$ is a composition of two $G$-equivariant maps, it is also equivariant.
    
    At this point, we wish to apply Lemma~\ref{prodOfDisjSuppAndIdGradeImplHomogFactors} to $(V_\Lambda \otimes 1_{\calH_{\Lambda^c}})^* \cdot (1_{\calH_\Lambda}\otimes V_{\Lambda^c})$ having grade $\hat{1}$. Apply the refining of the $\widehat{G}$ grading in Subsection~\ref{SectionRefiningGHatGrading} where the region $\Lambda \subset \Gamma$ chosen is the cone $\Lambda$. As described in Remark~\ref{RemarkGByGCovariantRepFromProductOfGCovReps}, for $\tilde{U}^{(\pi_0)}_{(g_1,g_2)} := U^{(\pi_\Lambda)}_{g_1}\otimes U^{(\pi_{\Lambda^c})}_{g_2}$, $(\pi_0, \tilde{U}^{(\pi_0)})$ is a $G\times G$-covariant representation of $(\calA,\tilde{\alpha})$ such that $\forall g \in G, \tilde{U}^{(\pi_0)}_{(g,g)}=U^{(\pi_0)}_g$. In addition, as $\pi_0 = \pi_\Lambda \otimes \pi_{\Lambda^c}$, $\pi_0(\calA_{\Lambda})''=\pi_\Lambda(\calA_\Lambda)'' \otimes 1_{\calH_{\Lambda^c}}$ and $\pi_0(\calA_{\Lambda^c})'' = 1_{\calH_{\Lambda}}\otimes \pi_{\Lambda^c}(\calA_{\Lambda^c})''$, and so for any non-zero $X \in \pi_0(\calA_{\Lambda})''$ and non-zero $Y \in \pi_0(\calA_{\Lambda^c})''$ we have $XY \ne 0$. Therefore, the conditions of Lemma~\ref{prodOfDisjSuppAndIdGradeImplHomogFactors} are satisfied, so for $A=(V_\Lambda \otimes 1_{\calH_{\Lambda^c}})^*$ and $B=(1_{\calH_\Lambda}\otimes V_{\Lambda^c})$, and $AB=V_{\sigma,\Lambda} V_{\sigma,\Lambda^c}^*$ being $G$-equivariant, i.e. having grade $\hat{1}\in \widehat{G}$, we conclude that $A=(V_\Lambda \otimes 1_{\calH_{\Lambda^c}})^*$ and $B=(1_{\calH_\Lambda}\otimes V_{\Lambda^c})$ are each homogeneous with respect to the $\widehat{G}$-grading, with grades inverses of each-other. Say $\phi$ is the grade of $(1_{\calH_\Lambda}\otimes V_{\Lambda^c})$, so $\phi^{-1}$ is the grade of $(V_\Lambda \otimes 1_{\calH_{\Lambda^c}})^*$.

    For all $g \in G$, $\Ad(U^{(\pi_0)}_g)((V_\Lambda \otimes 1_{\calH_{\Lambda^c}})^*)=\phi^{-1}(g)(V_\Lambda \otimes 1_{\calH_{\Lambda^c}})^*$, so $$\Ad(U^{(\pi_0)}_g)((V_\Lambda \otimes 1_{\calH_{\Lambda^c}}))=(\phi^{-1}(g)(V_\Lambda \otimes 1_{\calH_{\Lambda^c}})^*)^* = \phi(g) (V_\Lambda \otimes 1_{\calH_{\Lambda^c}}),$$ so $(V_\Lambda \otimes 1_{\calH_{\Lambda^c}})$ is homogeneous of grade $\phi$ as well.

    So, with $(V_\Lambda \otimes 1_{\calH_{\Lambda^c}}) = (U V_{\sigma,\Lambda}^*)$ and $(1_{\calH_\Lambda}\otimes V_{\Lambda^c})=U V_{\sigma,\Lambda^c}$ both homogeneous of grade $\phi$, multiplying either by $V_{\sigma,\Lambda}$ or $V_{\sigma,\Lambda^c}$ respectively on the right, we get that $U$ is homogeneous of grade $\phi$ as well (by Lemma~\ref{ProductOfHomogOpsGradesAdd}), because $V_{\sigma,\Lambda}$ and $V_{\sigma,\Lambda^c}$ are $G$-equivariant, i.e. of grade $\hat{1}$, and $\phi \hat{1} = \phi$.

    Finally, if for some unitary $U_2 : \calH_\sigma \to \calH_{\pi_0}$ is satisfies $\Ad(U_2)\circ \sigma = \pi_0$, then, because $\pi_0 = \Ad(U) \circ \sigma$, we have $\Ad(U^* U_2) \circ \sigma = \sigma$, and so $U^* U_2 \in \sigma(\calA)'$, and therefore because $\sigma$ is irreducible, $U^* U_2\in \sigma(\calA)'=\bc 1_{\calH_\sigma}$ and so $U_2$ is just $U$ multiplied by a phase factor, and so has the same grade $\phi$.

    Hence, the irreducible $G$-covariant representations of $(\calA,\alpha)$ satisfying the $G$-equivariant version of the superselection criterion (relative to $(\pi_0, U^{(\pi_0)}_\xbullet)$), are classified by $\widehat{G}$ up to $G$-equivariant unitary equivalence.
\end{proof}

In particular, the result applies to the GNS representation of a pure $G$-invariant product state:
\begin{corollary}\label{CorClassificationForProductState}
Let $\omega = \omega_\Lambda \otimes \omega_{\Lambda^c}$, where 
$\omega_\Lambda$ and $\omega_{\Lambda^c}$ are pure $G$-invariant states on 
$\calA_{\Lambda}$ and $\calA_{\Lambda^c}$ respectively.
Let $(\calH_{\pi_\omega}, \pi_\omega, \Omega_\omega)$ be the GNS representation of $\omega$,
and let $U^{(\pi_\omega)}$ be the unitary $G$-action on $\calH_{\pi_\omega}$ that fixes $\Omega_\omega$
and makes $(\pi_\omega, U^{(\pi_\omega)})$ a $G$-covariant representation of $(\calA,\alpha)$ as in Lemma~\ref{GNSRepOfGInvStateIsGCompatible}.
Then the irreducible $G$-covariant representations that satisfy the
$G$-equivariant superselection criterion with respect to $(\pi_\omega, U^{(\pi_\omega)})$
are classified by $\widehat{G}$ up to $G$-equivariant unitary equivalence.
\end{corollary}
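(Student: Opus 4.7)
The plan is to reduce the corollary directly to Theorem~\ref{classificationOfSectorsForGEquivariantSSCWrtProductRep} by exhibiting $(\pi_\omega, U^{(\pi_\omega)})$ as $G$-equivariantly unitarily equivalent to a tensor product of irreducible $G$-covariant representations on the two regions, and then noting that the classifications on both sides are preserved under such equivalences.

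First, I would construct the GNS triples $(\calH_{\pi_{\omega_\Lambda}}, \pi_{\omega_\Lambda}, \Omega_{\omega_\Lambda})$ and $(\calH_{\pi_{\omega_{\Lambda^c}}}, \pi_{\omega_{\Lambda^c}}, \Omega_{\omega_{\Lambda^c}})$ of the two factor states. Purity of $\omega_\Lambda$ and $\omega_{\Lambda^c}$ gives irreducibility of the two GNS representations, and their $G$-invariance together with Lemma~\ref{GNSRepOfGInvStateIsGCompatible} yields continuous unitary representations $U^{(\pi_{\omega_\Lambda})}$ and $U^{(\pi_{\omega_{\Lambda^c}})}$ fixing the respective cyclic vectors and implementing $\alpha_\Lambda$ and $\alpha_{\Lambda^c}$. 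Form the tensor product $\pi_0 := \pi_{\omega_\Lambda} \otimes \pi_{\omega_{\Lambda^c}}$ with $U^{(\pi_0)}_g := U^{(\pi_{\omega_\Lambda})}_g \otimes U^{(\pi_{\omega_{\Lambda^c}})}_g$; by Remark~\ref{RemarkGByGCovariantRepFromProductOfGCovReps}, this is a $G$-covariant representation of $(\calA,\alpha)$ of exactly the form required as the reference in Theorem~\ref{classificationOfSectorsForGEquivariantSSCWrtProductRep}.

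Next, I would identify $(\pi_\omega, U^{(\pi_\omega)})$ with $(\pi_0, U^{(\pi_0)})$ up to $G$-equivariant unitary equivalence. The vector state of $\Omega_{\omega_\Lambda} \otimes \Omega_{\omega_{\Lambda^c}}$ on $\pi_0$ reproduces $\omega$ on elementary tensors in $\calA^{\rm loc}$ and therefore on all of $\calA$ by continuity, and the tensor of cyclic vectors is cyclic for $\pi_0$; so $(\calH_{\pi_0}, \pi_0, \Omega_{\omega_\Lambda}\otimes \Omega_{\omega_{\Lambda^c}})$ is a GNS representation of $\omega$. Uniqueness of the GNS construction produces a unitary $W : \calH_{\pi_\omega} \to \calH_{\pi_0}$ with $W \Omega_\omega = \Omega_{\omega_\Lambda}\otimes \Omega_{\omega_{\Lambda^c}}$ and $\Ad(W)\circ \pi_\omega = \pi_0$. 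Then $W^* U^{(\pi_0)}_g W$ and $U^{(\pi_\omega)}_g$ both implement $\alpha_g$ in $\pi_\omega$ and both fix $\Omega_\omega$; a standard argument (their product with the inverse of the other lies in $\pi_\omega(\calA)'$ and fixes the cyclic vector, hence equals the identity on a dense subspace) forces them to coincide, so $W$ is $G$-equivariant.

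Finally, $G$-equivariant unitary equivalence of reference representations clearly preserves both the $G$-equivariant superselection criterion (Definition~\ref{GEquiSSC}) and $G$-equivariant unitary equivalence of sector representations: if $(\sigma, U^{(\sigma)})$ satisfies the criterion with respect to $(\pi_\omega, U^{(\pi_\omega)})$ with unitaries $V_{\sigma,\Lambda}$, then $W V_{\sigma,\Lambda}$ works with respect to $(\pi_0, U^{(\pi_0)})$, and vice versa. Applying Theorem~\ref{classificationOfSectorsForGEquivariantSSCWrtProductRep} to the reference $(\pi_0, U^{(\pi_0)})$ then gives the classification by $\widehat{G}$, and transporting along $W$ yields the same classification for $(\pi_\omega, U^{(\pi_\omega)})$. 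The only real subtlety is the uniqueness of the $G$-representation on $\calH_{\pi_\omega}$ (which is also the one nontrivial ingredient behind Lemma~\ref{GNSRepOfGInvStateIsGCompatible} itself), but I expect this to be routine; everything else is bookkeeping.
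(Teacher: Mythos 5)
Your proposal is correct and follows essentially the same route as the paper: build the GNS triples of $\omega_\Lambda$ and $\omega_{\Lambda^c}$, equip them with the $G$-actions from Lemma~\ref{GNSRepOfGInvStateIsGCompatible}, identify the tensor product with $(\pi_\omega,U^{(\pi_\omega)})$ via GNS uniqueness, verify $G$-equivariance of the intertwiner on the dense subspace $\pi_0(\calA)\Omega_0$, and transport the classification from Theorem~\ref{classificationOfSectorsForGEquivariantSSCWrtProductRep}. Your explicit remark that purity yields the irreducibility required of $\pi_\Lambda$ and $\pi_{\Lambda^c}$ in the theorem's hypotheses is a detail the paper leaves implicit, but otherwise the two arguments coincide.
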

\begin{proof}
Let $(\calH_\Lambda, \pi_\Lambda, \Omega_{\omega_\Lambda})$ and 
$(\calH_{\Lambda^c}, \pi_{\Lambda^c}, \Omega_{\omega_{\Lambda^c}})$ 
be the GNS representations of $\omega_\Lambda$ and $\omega_{\Lambda^c}$.
By Lemma~\ref{GNSRepOfGInvStateIsGCompatible}, there exist 
$U^{(\pi_\Lambda)} : G \to \calU(\calH_\Lambda)$ and 
$U^{(\pi_{\Lambda^c})} : G \to \calU(\calH_{\Lambda^c})$
making $(\pi_\Lambda, U^{(\pi_\Lambda)})$ and 
$(\pi_{\Lambda^c}, U^{(\pi_{\Lambda^c})})$ $G$-covariant representations of 
$(\calA_\Lambda, \alpha_\Lambda)$ and $(\calA_{\Lambda^c}, \alpha_{\Lambda^c})$, respectively, and such that $U^{(\pi_\Lambda)}, U^{(\pi_{\Lambda^c})}$ fix $\Omega_{\omega_\Lambda}, \Omega_{\omega_{\Lambda^c}}$ respectively.

Define
    \[
        \calH_{\pi_0} := \calH_\Lambda \otimes \calH_{\Lambda^c}, \quad
        \pi_0 := \pi_\Lambda \otimes \pi_{\Lambda^c}, \quad
        \Omega_0 := \Omega_{\omega_\Lambda} \otimes \Omega_{\omega_{\Lambda^c}},
    \]
    and set $U^{(\pi_0)}_g := U^{(\pi_\Lambda)}_g \otimes U^{(\pi_{\Lambda^c})}_g$.
Then $(\calH_{\pi_0}, \pi_0, \Omega_0)$ is a GNS representation of $\omega$, and $(\pi_0, U^{(\pi_0)})$ is a $G$-covariant representation of $(\calA,\alpha)$ with $U^{(\pi_0)}$ fixing $\Omega_0$.

By uniqueness of GNS representations up to unitary equivalence, there exists a unitary $V : \calH_{\pi_0} \to \calH_{\pi_\omega}$ such that
$\Ad(V) \circ \pi_0 = \pi_\omega$ and $V \Omega_0 = \Omega_\omega$.
Since $(\pi_\omega, U^{(\pi_\omega)})$ and $(\pi_0, U^{(\pi_0)})$ are both $G$-covariant representations of $(\calA,\alpha)$,
for any $g \in G$ and $A \in \calA$ we have
\[
(U^{(\pi_\omega)}_g)^* V U^{(\pi_0)}_g \pi_0(A)\Omega_0
= (U^{(\pi_\omega)}_g)^* \pi_\omega(\alpha_g(A)) \Omega_\omega
= \pi_\omega(A)\Omega_\omega
= V\pi_0(A)\Omega_0.
\]
Thus $(U^{(\pi_\omega)}_g)^* V U^{(\pi_0)}_g$ and $V$ agree on the dense set 
$\pi_0(\calA)\Omega_0$, and hence coincide on all of $\calH_{\pi_0}$. 
Therefore $V$ is a $G$-equivariant unitary equivalence (Definition~\ref{def:GEquivariantMap}).

Therefore, any irreducible $G$-covariant representation $(\calH_\sigma,\sigma)$ of $(\calA,\alpha)$ that satisfies the $G$-equivariant superselection criterion with respect to $(\pi_\omega,U^{(\pi_\omega)})$ also satisfies it with respect to $(\pi_0,U^{(\pi_0)})$.
Therefore, by Theorem~\ref{classificationOfSectorsForGEquivariantSSCWrtProductRep}, there exists a unique $\phi \in \widehat{G}$ such that there is a unitary $U : \calH_\sigma \to \calH_{\pi_0}$ of grade $\phi$ such that $\Ad(U)\circ \sigma = \pi_0$.
Then $VU : \calH_\sigma \to \calH_{\pi_\omega}$ is a unitary of grade $\phi$ satisfying 
$\Ad(VU)\circ\sigma = \pi_\omega$.
This unitary is unique up to phase, as is standard for intertwiners between 
irreducible representations. 
Hence the irreducible 
$G$-covariant representations satisfying the 
$G$-equivariant superselection criterion with respect to 
$(\pi_\omega,U^{(\pi_\omega)})$ are classified by $\widehat{G}$ up to 
$G$-equivariant unitary equivalence.

\end{proof}

Hence, for $G$-invariant product states, the $G$-equivariant superselection sectors are classified (up to $G$-equivariant unitary equivalence) by the Pontryagin dual $\widehat{G}$.
A natural next step is to explore how aspects of this classification persist or are modified for reference representations that are not product representations and that already exhibit a non-trivial sector structure prior to imposing $G$-equivariance.
It would also be interesting to study the non-abelian case, where one might expect the superselection sectors to be labeled by the finite-dimensional irreducible representations of $G$.

\bibliographystyle{alpha}
\bibliography{bibliography.bib}

\end{document}